\crefname{hypothesis}{Hypothesis}{Hypotheses}
\title{Modeling and control of malaria dynamics\\  in fish farming regions
\thanks{To appear in SIAM Journal of Applied Dynamical Systems.
\funding{This research was funded by the Applied Research Project of Fundação Getúlio Vargas {\em ``Modelagem, análise e estimativa da contribuição dos tanques de piscicultura na população do mosquito anopheles e o impacto na transmissão da malária no Alto Juruá, Acre''}, by the Program {\em Jovem Cientista do Nosso Estado} of FAPERJ, Brazil, by the STIC AmSud funding for the MOSTICAW Project (Process No. 99999.007551/2015-
00) and by CNPq Grant No. 454665/2014-8.}}
}
\author{Felipe J.P. Antunes \thanks{Escola de Matem\'atica Aplicada FGV EMAp, Funda\c{c}\~ao Getulio Vargas, Rio de Janeiro, Brazil
(\email{fjpantunes2@gmail.com}).}
\and M. Soledad Aronna \thanks{Escola de Matem\'atica Aplicada FGV EMAp, Funda\c{c}\~ao Getulio Vargas, Rio de Janeiro, Brazil
(\email{soledad.aronna@fgv.br},
\url{https://sites.google.com/view/aronna}).}
\and Cl\'audia T. Code\c{c}o \thanks{Programa de Computa\c{c}\~ao Cient\'ifica-Funda\c{c}\~ao Oswaldo Cruz, Rio de Janeiro, Brazil 
  (\email{codeco@fiocruz.br}).}
}
\newcommand*{\addFileDependency}[1]{
  \typeout{(#1)}
  \@addtofilelist{#1}
  \IfFileExists{#1}{}{\typeout{No file #1.}}
}
\newcommand{\sol}{\textcolor{black}}
\newcommand{\fel}{\textcolor{black}}
\newcommand\munderbar[1]{%
  \underaccent{\bar}{#1}}
\begin{document}

\maketitle

\begin{abstract}
In this work we propose a model that represents the relation between fish ponds, the mosquito population and the transmission of malaria.
  It has been observed that in the Amazonic region of Acre, in the North of Brazil, fish farming is correlated to the transmission of malaria when carried out in artificial ponds that become breeding sites. Evidence has been found indicating that cleaning the vegetation from the edges of the crop tanks helps to control the size of the mosquito population. 

 We use our model to determine the effective contribution of {fish farming practices on malaria transmission dynamics}. The model consists of a nonlinear system of ordinary differential equations with jumps at the cleaning time, which act as {\em impulsive controls}. We study the asymptotic behaviour of the system in function of the intensity and periodicity of the cleaning, and the value of the parameters. In particular, we state sufficient conditions under which the mosquito population is {eliminated or persists}, and under which the malaria is {eliminated} or becomes endemic. \sol{We prove our conditions by applying results for cooperative systems with concave nonlinearities.}
\end{abstract}

\begin{keywords}
  malaria modelling, monotone dynamical system, cooperative system, impulsive ordinary differential equations
\end{keywords}

\begin{AMS}
  92B05, 92D30, 92D25, 34C12, 34C25  	
\end{AMS}

\section{Introduction}

Malaria is a disease caused by parasites of genus {\em Plasmodium}, transmitted to human beings through the bite of the female {\em Anopheles} mosquito.
Symptoms of malaria range from fever, tiredness, shivers, vomits and headaches in mild cases, to coma or even death in severe cases. It is specially dangerous to pregnant women and small children. Of the five species of {\em Plasmodium}, {\em Plasmodium falciparum} causes the most virulent malaria, and {\em Plasmodium vivax} is the most frequent cause of recurrent malaria {in Brazil} \cite{Carlos2019_ComprehensiveAnalysisMalaria}.

{The Amazon region concentrates} $99.9\%$ of {all} malaria cases reported in {Brazil}, with an annual mean of {150,000 reported cases between 2019 and 2021, with 1,500 hospitalizations and 50 deaths per year}. Even inside the Amazon region, incidence of the disease is concentrated: only 8\% of the municipalities were classified as medium or high risk in 2021 \cite{MinSaude2022_BoletimEpidemiologico}. In the state of Acre there are four counties with high incidence of the disease: Cruzeiro do Sul, M\^ancio Lima, Rodrigues Alves and Tarauacá \cite{Reis2015plos}. {In this region, characterized by vast wetlands, malaria vectors reproduce in the shaded mats of floating vegetation, debris, and along the shaded margins of slowly running creeks \cite{manguin1996characterization}. Humans living close to these areas are historically at higher risk of malaria exposition.} {In the 2000's,} fish farming was stimulated as a sustainable way to develop the local economy, as part of the Brazilian Federal Government's poverty alleviation program. {A large amount of ponds were created in small farms and within backyards, often very close to dwellings and towns.  These new water reservoirs, when not properly managed, accumulated vegetation and debris, and became new breeding sites for {\em Anopheles} mosquitoes, amplifying the risk of malaria transmission \cite{monnerat2014inventory,Reis2015malj}.} Works such as \cite{Reis2015malj} give evidence that fish ponds {were responsible for} an increase in mosquito population, with four times more larvae than natural water bodies. Cases of malaria {were found to be} spatially and temporally correlated with the opening of fish ponds. The most significant predictor of larval {abundance} \cite{Reis2015malj} was percentage of border {with} vegetation. This is attributed to the {fact that vegetation acts as a refuge for the {\em Anopheles}  larvae and pupae, by providing  shade, food, still water and protection from larvivorous fish predation \cite{danis1997aquatic,Reis2015malj,kumar2006larvicidal}.} 

A difference in infestation levels has also been  observed between commercial and non-commercial ponds, {which is associated to the presence of larvivorous fishes in the former} \fel{(see \cite{Reis2015malj}).} 
Fish farmers {tend to} keep their ponds clean of vegetation due to concerns regarding production, {or to avoid the} presence of snakes and other animals.  While there is government incentive for the construction of ponds, there is no incentive for {their} maintenance {with the purpose of malaria protection}. {Moreover, many unproductive ponds are abandoned and there is a lack of policy for land-filling} of ponds that are no longer in use \cite{Reis2015malj,Reis2015plos}.

Due to the impact border vegetation on larval abundance, cleaning practices can be used as a measure of malaria control, {but it depends exclusively on the engagement of the farmer community}. Fish farmers were interviewed for information regarding {their practices,} the cost of border cleaning and the speed of vegetation growth, during an educational event  {\em ``Fish farming without malaria''} organized by Oswaldo Cruz Foundation {in 2018, in Mancio Lima, Acre}. {This survey \cite{alves_codeco_peiter_souza-santos_2019} provides the background information for the mentioned study}.


 In this article we propose a model that puts together the dynamics of the mosquito population {bred} in fish ponds, the {growth-cut dynamics} of vegetation along the ponds' borders, and the evolution of malaria transmission. We consider that vegetation has a direct impact on the ponds capacity of sheltering aquatic-stage mosquitoes and we further incorporate the possibility of presence of larvae predatory fish. Some simplifying assumptions are made. The system dynamics is considered spatially independent, and homogeneous for each compartment. We assume that vegetation cleaning is synchronized, happens periodically, and its time frame is small when compared to the time frame of the mosquito population and disease dynamics, ({\em i.e.}{,} cleaning is an event that happens in a few hours, and after a number of days it occurs again; while larvae maturation, mosquito lifespan and the spread of malaria through mosquito bites are phenomena that have a daily timescale). Therefore, we model vegetation cleaning as impulses that are applied to the vegetation variable, this is, at the cleaning  \sol{times} we apply a jump to the vegetation with the magnitude of the proportion of vegetation cleaned at that time. \fel{Our model and its analysis were inspired by the work \cite{DumontThuilliez2016} by Dumont and Thuilliez.}
 \sol{For additional references to malaria and malaria control modelling, we refer the reader to \cite{ross1911prevention,Mckenzie2004,smith2004statics,mandal2011mathematical} and references therein.}

\sol{Our main results establish algebraic conditions on the parameters of the model and of the cleaning frequency and intensity that determine the asymptotic behaviour of the mosquito population and the evolution of the disease.
 In order to predict this asymptotic behaviour, we prove and apply an extension of a result by  \sol{Jiang} \cite{jifa1993cooperative}. The latter holds for cooperative systems with concave nonlinearities and states sufficient conditions under which the trajectories either converge to an equilibrium or are attracted to a periodic solution. This article of  \sol{Jiang} \cite{jifa1993cooperative} extends previous works on cooperative systems by Smith \cite{smith1986cooperative} and other similar results for autonomous cooperative systems proved by Hirsch \cite{hirsch1984dynamical,hirsch1985systems}  and  \sol{Jiang} \cite{jifa1990algebraic}. For more historical and technical details on the subject the reader is referred to \cite{jifa1993cooperative}. In particular, in this article (see Appendix \ref{appendix_cooperative}) we extend the algebraic conditions of \cite[Theorem 5.5]{jifa1993cooperative} to dynamics that are merely measurable with respect to the time variable.}

The article is organized as follows.
A detailed description of the model, its components and assumptions are given in Section \ref{modeling:the_model}.
We derive analytical and numerical results from the model. In Section \ref{modeling:analysis} we apply Theorem \ref{Ch3:Main_Theo} to the study of the asymptotic behaviour of our system, establishing sufficient conditions for the mosquito population and the disease to either assume asymptotic periodic behaviour or to be eradicated.
In Section \ref{Ch2:NumSimSection}, we simulate the system numerically in order to illustrate its different possible asymptotic behaviours. We also simulate different values for cleaning periodicity to show the effect of the cleaning frequency on the incidence of malaria. Finally, in the Appendix {\ref{AppK}}, \ref{Ch3:sec2}, \ref{appendix_cooperative} and \ref{appendix_impulsive}, we include a series of definitions and technical results mainly concerning ordinary and impulsive differential equations, and we prove an extension of the main result of \cite{jifa1993cooperative} that holds for less regular right hand-sides.

\subsection{Notation and preliminary definitions}\label{SecNotation} 

    In this section, we introduce some notation and definitions that will be used throughout the article. In particular, we present elements related to cooperative systems and concave nonlinearities that were employed in \cite{jifa1993cooperative} and that are necessary for establishing our main results.
    
    Given any function $u$ defined in the interval $(t,t+\epsilon_0)$ (resp., $(t-\epsilon_0,t)$), set $u(t^+) \coloneqq \lim_{\epsilon \to 0^+} u(t + \epsilon)$ (resp., $u(t^-) \coloneqq \lim_{\epsilon \to 0^-} u(t - \epsilon)$).
    For $x,y \in \mathbb{R}^n$, we write $x < y$ and $x \leq y$ if the inequalities hold component-wise. We use $x \lneq y$ if $x \leq y$ and $x \neq y$. If $x$ and $y$ are  matrices, then the inequalities should hold entry-wise.  For $x < y$, we define $[x,y] \coloneqq \{z : x\leq z \leq y\}$, $(x,y] \coloneqq \{z : x < z \leq y\}$ and $(x,y) = \{z : x < z < y\}.$
    
   Given an open set $\Omega \subseteq \mathbb{R}\times\mathbb{R}^n$, Our model will be a differential equation of the form
\begin{equation}\label{sec2:eq1}
\dot x = \mathcal{F}(t,x),    
\end{equation}
    where $  \mathcal{F} \colon \Omega \to \mathbb{R}^n$ is differentiable w.r.t. $x$, measurable and $\tau$-periodic w.r.t.  $t$.
   The function $\mathcal{F}$ is called \textit{concave} if $0 < x < y $ implies 
    $$ 
    D_x  \mathcal{F} (t,x) \gneq D_x  \mathcal{F}(t,y), 
    $$ 
    where $ D_x \mathcal{F}(t,x)$ denotes the {\em Jacobian matrix} of $ \mathcal{F}$ with respect to $x$. The system is called {\em cooperative} if 
    $$
    \frac{\partial  \mathcal{F}_i}{\partial x_j} (t,x) \geq 0,\quad \text{for } i \neq j,
    $$ 
    for $(t,x) \in \Omega$. This is equivalent to saying that $ D_x  \mathcal{F} (t,x)$ has non-negative off-diagonal terms.

 Given a Cauchy problem associated to \eqref{sec2:eq1} and with initial condition $x(t_0) = x_0,$ we write ${\bf x} (t;t_0,x_0)$ to denote the value of an associated trajectory at time $t$.
   A {\em Carath\'eodory solution} of a Cauchy problem is an absolutely continuous function $t \mapsto {\bf x} (t;t_0,x_0)$ defined on some interval $[t_0,t_1]$ which satisfies \eqref{sec2:eq1} almost everywhere. We may simply write ${\bf x}(t)$ if both $t_0$ and $x_0$ are implicit from the context.
   
If a set $\mathcal{K} \subset \mathbb{R}^n$ is such that for every $x_0\in \mathcal{K}$, the solution to the Cauchy problem \eqref{sec2:eq1} with $x(t_0) = x_0$ satisfies $x(t) \in \mathcal{K}$ for every $t > t_0$ for which the solution $x$ is defined, then we say that $\mathcal{K}$ is a {\em forward invariant set} for $\mathcal{F}$.  
We say that a trajectory ${\bf x}$  is {\em globally asymptotic attractive} for \eqref{sec2:eq1} if, for every $(t_0,x_0) \in \Omega$, the solution $x$ of \eqref{sec2:eq1} with initial condition $x(t_0)=x_0$ satisfies
\begin{equation}
    \lim_{n\to \infty} \max_{t \in [n,\infty)} |{\bf x}(t;t_0,x_0) - x(t)| = 0.
\end{equation}

   An $n \times n$-matrix $A$ is called {\em cooperative} if its off-diagonal entries are non negative. The Jacobian matrix of a cooperative system is a cooperative matrix.
 Any $k\times k$-submatrix of $A$ formed by deleting $n - k$ rows of $A$, and the corresponding $n - k$ columns is called a {\em principal submatrix} of $A$, and its determinant is called a {\em principal minor} of order $k$. If the $n - k$ removed lines and columns are the last ones, we say that the resulting matrix is a {\em leading principal submatrix}, and its determinant is the {\em leading principal minor}.
 
\sol{Given a positive $\tau$ and a time-dependent $\tau$-periodic essentially bounded function $g\colon [0,+\infty) \to \mathbb{R},$ we set $\overline{g} \coloneqq {\rm ess\,sup}_{[0,\tau)} g (t)$ and $\underline{g} \coloneqq {\rm ess\,inf}_{[0,\tau)} g (t).$}
We extend this notation \sol{to matrix-valued functions} $A\colon [0,+\infty) \to \mathbb{R}^{n\times m}$ in the following sense: if $a_{ij}$ denote the entries of $A$ then set, for $i=1,\dots,n$ and $j=1,\dots,m,$
\[
\overline a_{ij} \coloneqq {\rm ess\,sup}_{[0,\tau)} a_{ij} (t),\quad 
\underline{a}_{ij} \coloneqq {\rm ess\,inf}_{[0,\tau]} a_{ij} (t),
\]
and write $\overline A$ and $\underline A$ for the constant $n\times m$-matrices with entries $\overline a_{i,j}$ and $\underline a_{i,j}$, respectively.

\section{The Model}\label{modeling:the_model}
    
    In this section we introduce the model. More precisely, in Subsection \ref{modeling:the_model:malaria}, we present a model for the interaction between the spread of malaria and the mosquito population dynamics. Subsection \ref{modeling:the_model:vegetation} is dedicated to the model for border vegetation growth and human action, that consists in border cleaning. Finally, Subsection \ref{modeling:the_model:full} joins both systems in order to asses cleaning border vegetation as a control method for malaria.

\subsection{Malaria Model}\label{modeling:the_model:malaria}
    
   A  malaria model will be introduced, joining ideas  from \sol{Dumont {\em et al.} \cite{DumontChiroleuDomerg2008,DumontTchuenche2012,DumontThuilliez2016}} and the classical model by Ross \cite{ross1911prevention}. 
   
   {Let us consider a community of humans living close to fish ponds in a malaria endemic region. 
   They are fish farmers, farm workers, family, neighbors, among others, who are exposed to mosquitoes that breed in these ponds. 
   Vegetation grows at the border of these fish ponds creating a habitat for {\em Anopheles} mosquitoes, while a fraction of ponds contains larvivorous fishes that can partially control the mosquito population. 
   Vegetation cleaning and presence of fishes thus act as tools for mosquito density control.} 
   
   {Malaria dynamics in the human population is described by a Susceptible-Infected-Susceptible model (SIS), which is standard for {\em Plasmodium vivax} malaria.}
   {This disease} has negligible mortality and a high rate of reinfection \cite{Reis2015malj,Reis2015plos}. 
   {For simplicity, this human community is closed, {\em i.e.,} with no demographic rates. 
   They are also considered to be homogeneous in terms of susceptibility and risk exposure. }
    

    {The vector population is divided into two stages: immatures that inhabit the fish ponds ($L$) and adult mosquitoes ($M$) that emerge from the ponds to feed on humans and eventually become infected.
    Once infected, the mosquito remains transmitting the parasite until death. 
    We assume that mosquito population regulation occurs at the immature stage, due to a density dependent recruitment rate. 
    The literature on the life history of {\em Anopheles} shows that larval crowing can increase mortality, extend development times and produce smaller and more fragile adults, that are less fertile \cite{gimnig2002density, muriu2013larval}.
    To represent the density-dependent effect of larval abundance on recruitment rate in a simplified way, we use a logistic function with a maximum {\em per-capita} recruitment rate $\alpha$ that is achieved when $L = 0$ and which decreases linearly as $L$ increases towards $K$ \cite{mandal2011mathematical}.
    $K$ is the slope of the larval-density dependent recruitment rate, thus, $L/K$ is a measure of how much the recruitment rate is reduced due to the crowding conditions. 
    In the ecological literature, K is often referred to as {\em “carrying capacity”,} but there is considerable controversy about this terminology and its meaning \cite{zhang2021carrying}.
    We assume that $K$ is proportional to the amount of border vegetation, since this constitutes the habitat where the {\em Anopheles} larvae live and compete for resources. The larger the area, the higher is $K$. 
    We further consider that the presence of predatory fishes acts to reduce $K$.  The underlying assumption is that part of the vegetated border is accessible to predatory fishes, limiting the habitat size and increasing crowding conditions \cite{Reis2015malj, orr1989experimental}. Additional comments on the parameter $K$ have been included in Appendix \ref{AppK}.}
    
    The evolution of the state variables is given by the following system of differential equations:
\begin{subequations}\label{Ch2:Sec2:sys1}
\begin{align}
     {\dot S} &= -\beta_{vh} S M_I + \kappa I, \\
     {\dot I} &= \beta_{vh} S M_I - \kappa I, \\
     {\dot M_S} &= \nu L - {\beta_{hv} I M_S} - \mu_M M_S,\\
     {\dot M_I} &= {\beta_{hv} I M_S} - \mu_M M_I,\\
     {\dot L} &= \alpha (M_I + M_S)\left(1 - \frac{L}{K} \right) - (\nu + \mu_L)L,
\end{align}
\end{subequations}
where the meaning of the involved parameters is given in Table \ref{model:table1}.
    {As already mentioned, we} assume  that $K$  is directly related to the amount of vegetation along ponds borders, since these areas with vegetation provide the habitat available for the larvae, where they have shade, food and {protection from the predators and from being carried away} (see {\em.e.g.} \cite{danis1997aquatic,Reis2015malj,kumar2006larvicidal} and references therein).
    
    Similar modeling choice for the recruiting term have been found in \cite{Moulay2011} and in \cite{DumontChiroleuDomerg2008,DumontTchuenche2012,DumontThuilliez2016}, the latter being our departure points for the general mosquito dynamic and breeding site control modeling.
    Additional details on {assumptions regarding} $K$ are given in the next subsection, where {border} vegetation appears as a time-variable function and $K$ is set to depend on the vegetation level.

    
\begin{table}[ht]
\centering
\begin{tabular}{cc}
    \hline
    \textbf{Parameter} & \textbf{Biological Meaning} \\
    \hline \hline 
    $\alpha$ & {Maximum {\em per-capita} recruitment} rate {of {\em Anopheles} immatures}\\ 
    $K^{-1}$ & {Slope of the larval-density dependent recruitment rate}\\
    $K_p^{-1}$ & {$K^{-1}$ in fish ponds with predatory fishes}  \\
    $K_w^{-1}$ & {$K^{-1}$ in fish ponds without predatory fishes}  \\
    $\nu$ & {Immature}-to-adult transition rate\\
    $\mu_L$ &   {Immature mosquito natural death rate}\\   
    $\mu_p$ &  {Immature mosquito death rate}  due to predatory fish \\
    $\mu_M$ & Adult mosquito mortality rate\\
    $\beta_{hv}$ & Human-to-mosquito {parasite transmission} rate\\  
    $\beta_{vh}$ & Mosquito-to-human {parasite transmission} rate\\
    $\kappa$  & Recovery rate {of} infected {humans}\\
       $\tau$ & Period{icity} of {fish pond} vegetation cleaning, in days\\
    $r$ & Rate of growth of vegetation {along the fish pond borders} \\
    $\gamma$ & Proportion of vegetation removed as a function of time \\
    \hline
\end{tabular}   
\caption{Biological meaning of the parameters for systems \eqref{Ch2:Sec2:sys1} and \eqref{sec2:sys3}.}
\label{model:table1}
\end{table}

\subsection{Dynamics of Vegetation}\label{modeling:the_model:vegetation}
    Next we present a model for the growth of the border vegetation {and the cleaning practices of the fish farmers. The vegetation is mostly fast growing herbaceous plants that demand frequent clearing. We let $V$ be the amount of border vegetation at time $t,$ defined as a proportion of the total pond border. } \cite{Reis2015malj,Reis2015plos} 
    
We assume that the percentage of border vegetation $V: \mathbb{R}_+\mapsto [0,1]$ grows proportionally to available space $1 - V$, and that {\em some} cleaning occurs every $\tau$ days, in which a part $\gamma$ of border vegetation is removed. Hence $\gamma:\mathbb{R}_+ \mapsto[0,1] $ is the fraction of vegetation cleaned as function of time. More precisely, at day $n\tau$, the proportion of vegetation removed is $\gamma(n\tau)$. \sol{Following ideas from Dumont {\em et al.} \cite{DumontTchuenche2012,DumontThuilliez2016}, we} propose the following system of impulsive ordinary differential equations to model the described situation:
\begin{subequations}
\label{Sec2:Sys2}
\begin{align}\frac{d V (t)}{d t}  &=  r (1 - V(t)),\quad \text{for } t \neq \tau n,  \\
    V (n \tau^+) & = V(n\tau^-)- \gamma (n \tau) V(n \tau^-),\quad \text{for }n \in \mathbb{N}\backslash\{0\},\\
        V(0) &= V_0, 
\end{align}
\end{subequations}
    where $r > 0$ is the vegetation growth rate (see Table \ref{model:table1}). 
    
    \sol{System \eqref{Sec2:Sys2} above can be seen as a particular {\em growth-disturbance model} for vegetation. For alternative choices of the cleaning actions, one can obtain different behaviours of the system, as shown in Reluga \cite{Reluga2004} and references therein. Here we impose $\gamma$ to asymptotically approach a constant value. Further generalizations are out of the scope of this article.}

\begin{proposition}\label{prop:veg_model_solutions}
    Given any initial condition $V_0 \in [0,1],$ there exists a unique solution of \eqref{Sec2:Sys2}, it is defined in $[0,+\infty)$, it is left-continuous and takes values in $[0,1]$.
\end{proposition}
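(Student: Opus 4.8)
The plan is to construct the solution piecewise over the sequence of inter-impulse intervals $[0,\tau],\ (\tau,2\tau],\ (2\tau,3\tau],\dots$ and to glue the pieces together using the impulse rule, exploiting that on each such interval the dynamics in \eqref{Sec2:Sys2} reduces to the scalar linear ODE $\dot V = r(1-V)$. First I would treat the initial interval $[0,\tau]$: the Cauchy problem $\dot V = r(1-V)$, $V(0)=V_0$ has a right-hand side that is affine (hence globally Lipschitz) in $V$, so standard ODE theory gives a unique $C^1$ solution, given explicitly by $V(t) = 1 + (V_0-1)e^{-rt}$. Since $V_0 \in [0,1]$ implies $V_0 - 1 \in [-1,0]$, and $e^{-rt} \in (0,1]$ for $t \geq 0$, the solution satisfies $V(t) \in [0,1]$ throughout $[0,\tau]$; equivalently, $[0,1]$ is forward invariant for the flow because $\dot V = r > 0$ at $V=0$ and $\dot V = 0$ at $V=1$.

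Next I would propagate the solution across the impulse. At $t=\tau$ the value $V(\tau) \in [0,1]$, and the impulse rule sets $V(\tau^+) = (1-\gamma(\tau))V(\tau)$. Since the modeling assumption gives $\gamma(\tau) \in [0,1]$, the factor $1-\gamma(\tau)$ lies in $[0,1]$, so $V(\tau^+)$ is a product of two numbers in $[0,1]$ and hence again lies in $[0,1]$. This is precisely the induction step: assuming the solution has been built on $[0,n\tau]$ with $V(n\tau^+) \in [0,1]$, I solve the same linear ODE on $(n\tau,(n+1)\tau]$ with initial value $V(n\tau^+)$, obtaining a unique solution that stays in $[0,1]$, and then apply the impulse at $(n+1)\tau$. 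Because the linear ODE never blows up and the iteration can be carried out for every $n$, this yields a solution defined on all of $[0,+\infty)$ with values in $[0,1]$.

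For left-continuity I would invoke the convention encoded in the impulse equation: the value $V(n\tau)$ is the pre-jump value, matching the left limit of the ODE solution on $((n-1)\tau,n\tau]$, while $V(n\tau^+)$ is the post-jump value used to restart the flow. Thus $V$ is continuous on each open inter-impulse interval and continuous from the left at each impulse time, with at most a downward jump to the right --- that is, $V$ is left-continuous on $[0,+\infty)$. Uniqueness follows from the same inductive scheme: the ODE solution on each interval is unique by Picard--Lindel\"of, and the jump values are uniquely prescribed by the impulse formula, so the glued trajectory is the only solution.

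The argument presents no deep obstacle; the main thing to get right is the bookkeeping of the impulse convention and the precise statement of left-continuity, together with checking that invariance of $[0,1]$ survives the impulse, which hinges on $\gamma$ taking values in $[0,1]$.
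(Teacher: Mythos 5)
Your proposal is correct and takes essentially the same approach as the paper: induction over the inter-impulse intervals, explicit (unique) solution of the affine ODE $\dot V = r(1-V)$ on each interval, preservation of $[0,1]$ across each jump because $1-\gamma(n\tau)\in[0,1]$, and gluing the pieces into a left-continuous global solution. Incidentally, your explicit formula $V(t)=1+(V_0-1)e^{-rt}$ carries the correct sign in the exponent, whereas the paper's proof writes $e^{rt}$, which is a typographical slip.
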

  
\begin{proof}
\sol{This result can be deduced from more general established properties of impulsive differential equations, as found in \cite{bainov1993impulsive}. The proof in our context is quite straightforward. Nevertheless, for a the sake of completeness, we include it in this article.}

The proof follows by induction on $n$. 
For $n=1$, let us consider the interval $[0,\tau)$. It is straightforward that  $V(t) = 1 - (1 - V_0) {e^{-rt}} $
is a solution of \eqref{Sec2:Sys2} in $[0,\tau]$. Moreover, it is unique, it is continuous and takes values in $[0,1]$.
Suppose now that there exists a unique solution ${V_n}$ defined on $[0,n\tau]$ that is left-continuous and takes values in $[0,1]$. Set 
$V_{n}^+ \coloneqq \Big(1 - \gamma\big(n \tau\big)\Big) {V_n}(n\tau^-). $ Since ${V_n}(n\tau^-) \in [0,1]$ and $1 - \gamma\big(n \tau\big) \in [0,1]$, we have that $V_{n}^+ \in [0,1]$. Consider the initial value problem on $[n \tau, (n+1)\tau]:$
\[
\dot V(t) = r(1 - V(t)),\qquad V(n\tau^+) = V_{n}^+. \]
It follows easily that $V(t) = 1 - (1 - V_{n}^+) {e^{-rt}}$ is its unique solution. Set now
\begin{equation*}
    {V_{n+1}}(t) \coloneqq
\begin{cases}
    {V_{n}}(t),\quad \text{for }t \in [0,n\tau]\\
    1 - (1 - V_{n}^+){e^{-rt}},\quad\text{for }t\in  (n \tau, (k+1)\tau].
\end{cases}
\end{equation*}
The function ${V_{n+1}}$ is the unique solution of  \eqref{Sec2:Sys2} in $[0,(n+1)\tau]$ and satisfies the desired properties. This concludes the inductive step and the proof.
\end{proof}  

\subsection{The complete Model}\label{modeling:the_model:full}
 
 We conclude the construction of our model by putting together \eqref{Ch2:Sec2:sys1} and \eqref{Sec2:Sys2} with the following additional considerations. 
 {Let us consider that there are two types of fish ponds,} namely {\em with} and {\em without predatory fish}. This differentiation induces  a splitting in the aquatic stage $L$ into $L_p$ and $L_w$, corresponding to larvae in ponds with and without predatory fish, respectively.  
 {In the absence of predation, larvae can graze in a larger area of vegetation thus relaxing density-dependent effects on the recruitment rate, such as competition for resources. 
 This is modeled by defining $K_p^{-1}$ and $K_w^{-1}$ as the slope of the larval-density dependent recruitment rate in ponds with and without predatory fishes, respectively, and letting $K_p < K_w$.} 
 
{Besides the increased density effects, the presence of fishes also acts directly on the mortality of larvae by predation. 
Thus the population $L_p$ suffers an extra mortality rate $\mu_p (1 - V )$, where $\mu_p$ is the maximum predation rate which decreases as hiding places in the vegetation increases.  
adult mosquitoes that emerge from $L_p$ and $L_w$ merge into a single Adult compartment $M_S$.}
{To complete the cycle, female mosquitoes must choose where to lay eggs.} The biological literature shows evidence that {\em Anopheles gambiae} females, like other mosquito species, choose among breeding sites. 
 Presence of predators is avoided while the presence of conspecifics tend to be attractive as it may indicate suitable habitats \cite{munga2014}.
{We model this behavior by attributing a higher probability of oviposition in ponds with more vegetation and without fishes. 
This is achieved by  multiplying the recruitment rates of $L_p$ and $L_w$ by $K_p/(K_w +K_p)$ and $K_w/(K_w +K_p)$, respectively, noting that these fractions add up to one and do not affect the total recruitment rate.} {The terms} $K_p$ and $K_w$ are positive-valued strictly increasing continuous functions of border vegetation.
  
The following system describes the joint dynamics of malaria, aquatic-stage and adult mosquitoes, and vegetation:
\begin{equation}
\begin{split}\label{sec2:sys3}
    {\dot V}  &=  r (1 - V), \\
    V (n\tau^+) &= V(n\tau^-) - \gamma (n \tau) V(n\tau^-),\quad \text{for}\  n \in \mathbb{N},\\
     {\dot L_p} &= \alpha \frac{K_p(V)}{K_w(V) + K_p(V)} (M_I + M_S)\left(1 - \frac{L_p}{K_p(V)} \right) - (\nu + \mu_L + \mu_p(1 - V))L_p,\\
     {\dot L_w} &= \alpha \frac{K_w(V)}{K_w(V) + K_p(V)} (M_I + M_S) \left(1 - \frac{L_w}{K_w(V)} \right) - (\nu + \mu_L)L_w,\\
     {\dot M_S} &= \nu (L_w + L_p) -{\beta_{hv} I M_S} - \mu_M M_S,\\
     {\dot M_I} &= {\beta_{hv} I M_S} - \mu_M M_I,\\
     {\dot S} &= -\beta_{vh} S M_I + \kappa I,\\
     {\dot I} &= \beta_{vh} S M_I - \kappa I.
\end{split}
\end{equation}
where the biological meaning of the parameters is given in Table \ref{model:table1}. The compartmental diagram for system \eqref{sec2:sys3} is shown in Figure \ref{modeling:diagrama_full}. 
\begin{figure}[ht]
\centering
\resizebox{0.95\textwidth}{!}{ 
\begin{tikzpicture}[
node distance = 3cm, 
thick,
   box/.style = {fill=#1,
                draw, solid, thin, minimum size=9mm},
    xs/.style = {xshift=#1mm},
    baseline = 3.5cm
                    ]

\path[use as bounding box] (0.5,-1.5) rectangle (11.5,-3.5);

\path[->] 

node (h0) []{}
node (h0a) [above=9mm of h0]{}
node (h0b) [below=9mm of h0]{}

node (V) [box = white, right = 11 mm of h0] {$V$}


node (h1) [right=15mm of V] {}

node (Lw) [box = white, above=5mm of h1] {$L_w$}
node (Lp) [box = white, below=5mm of h1] {$L_p$}

(h0a) edge[] node (inLw) []{} (Lw)
(h0b) edge[] node(inLp) []{} (Lp)

(V.90) edge[dashed] (inLw)
(V.270) edge[dashed] (inLp)

node (aLw)[above = 12mm of Lw]{}
(Lw) edge node[auto]{} (aLw)

node (bLp)[below = 12mm of Lp]{}
(Lp) edge node (Lp_out) [auto, swap]{}  (bLp)

 node (h2) [right=30mm of h1] {}

  node (MS) [box = white, above=5mm of h2] {$M_S$}

 node (MI) [box = white, below=5mm of h2] {$M_I$}

  node (aMS) [above=12mm of MS] {}
 (MS) edge node[auto]{} (aMS)

 node (bMI) [below=12mm of MI] {}
 (MI) edge node[auto]{} (bMI)

 node (h3) [right=20mm of h2] {}

 node (S) [box = white, above=5mm of h3] {$S$}
 node (I) [box = white, below=5mm of h3] {$I$}

 (MS) edge node (betaI) [auto]{} (MI)

 
 (Lw.0) edge node[above]{} (MS.170)

 (Lp.0) edge[] node[below]{} (MS.190)


 (I.80) edge[->] node[right]{} (S.280)

 (S.260) edge[->] node[left] (betaMI) []{} (I.100)

(I) edge[dashed] (betaI)

(MI) edge[dashed] (betaMI)

;
\end{tikzpicture}}\\
\caption[Compartmental diagram for system \eqref{sec2:sys3}.]{Compartmental diagram for the dynamics described in system \eqref{sec2:sys3}. \fel{Solid arrows represent the flow of individuals from one compartment to the other. Dashed arrows stand for some form of influence that a  compartment has in the flow entering of exiting another compartment. We omit some arrows and the parameters in order to keep the diagram compact and legible.}}
\label{modeling:diagrama_full}

\end{figure}

  We now state and prove results concerning properties of solutions to initial value problems for system \eqref{sec2:sys3}.  Consider the compact set
\begin{equation}\label{sys3compact}
    \mathcal{K} \coloneqq 
    \left\{
    \begin{aligned}
        & (V,M_S,M_I,L_p,L_w,S,I) \in \mathbb{R}^7_+ : V \leq 1,\, S + I \leq 1, \\
        & M_S + M_I \leq (\nu/\mu_M)(L_p + L_w),\, L_p  \leq \max_{V \in [0,1]}K_p(V),\, L_w  \leq \max_{V \in [0,1]}K_w(V)
    \end{aligned}
    \right\}.
\end{equation}

\begin{proposition}\label{Sys3WellPosedProp}
Given an initial condition $x_0$ belonging to $\mathcal{K}$, there exists a unique solution of \eqref{sec2:sys3} with value $x_0$ at $t=0$, it is defined in $[0,+\infty)$ and  remains in $\mathcal{K}.$ The last assertion states that $\mathcal{K}$ is positively invariant under \eqref{sec2:sys3}. 
\end{proposition}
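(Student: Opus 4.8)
The plan is to exploit the impulsive structure: on each inter-impulse slab $[n\tau,(n+1)\tau)$ system \eqref{sec2:sys3} reduces to an impulse-free autonomous system, and the pieces are glued across the cleaning times exactly as in the proof of Proposition~\ref{prop:veg_model_solutions}. The jump acts only on the vegetation coordinate, through $V\mapsto(1-\gamma(n\tau))V$, which maps $[0,1]$ into itself and leaves the other six coordinates fixed; hence the post-jump state lies in $\mathcal{K}$ whenever the pre-jump state does. The argument therefore splits into three tasks: local existence and uniqueness on a slab, forward invariance of $\mathcal{K}$ for the impulse-free flow, and an induction on $n$ assembling a global, unique, $\mathcal{K}$-valued solution.

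For the first task I would note that, between impulses, \eqref{sec2:sys3} is of the form \eqref{sec2:eq1} with $\mathcal{F}$ autonomous and differentiable in $x$: the only non-polynomial terms are the ratios $K_p(V)/(K_p(V)+K_w(V))$, $L_p/K_p(V)$ and $L_w/K_w(V)$, whose denominators are bounded below by the positive constants $\min_{[0,1]}(K_p+K_w)$ and $\min_{[0,1]}K_p,\ \min_{[0,1]}K_w$, using that $K_p,K_w$ are continuous and strictly positive on $[0,1]$. Thus $\mathcal{F}$ is locally Lipschitz on a neighbourhood of $\mathcal{K}$, and Picard--Lindel\"of yields, for each initial datum in $\mathcal{K}$, a unique $C^1$ Carath\'eodory solution on a short interval $[n\tau,n\tau+\epsilon)$.

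The heart of the proof is the forward invariance of $\mathcal{K}$, which I would obtain by checking a Nagumo-type subtangentiality condition face by face, i.e.\ that on each boundary piece the field does not point outward. Most faces are immediate: on each coordinate hyperplane $\{x_i=0\}$ every possibly negative term carries a factor $x_i$, so $\dot x_i\ge0$ (e.g.\ $\dot S=\kappa I\ge0$ at $S=0$, $\dot L_p\ge0$ at $L_p=0$), whence $\mathbb{R}^7_+$ is invariant; $\dot V=r(1-V)=0$ at $V=1$ preserves $V\le1$; $\dot S+\dot I\equiv0$ shows $S+I$ is conserved, so $S+I\le1$ persists; and on $\{L_p=\max_{[0,1]}K_p\}=\{L_p=K_p(1)\}$ the factor $1-L_p/K_p(V)\le0$ forces $\dot L_p\le0$, with the symmetric statement for $L_w$. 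The main obstacle is the single \emph{oblique} face $\{M_S+M_I=(\nu/\mu_M)(L_p+L_w)\}$, whose outward normal is $(1,1,-\nu/\mu_M,-\nu/\mu_M)$ in the $(M_S,M_I,L_p,L_w)$ variables, so subtangentiality reads $\dot M_S+\dot M_I-(\nu/\mu_M)(\dot L_p+\dot L_w)\le0$. On this face $\dot M_S+\dot M_I=\nu(L_p+L_w)-\mu_M(M_S+M_I)=0$, so the condition collapses to $\dot L_p+\dot L_w\ge0$; summing the two aquatic equations, the competition terms combine into $1-(L_p+L_w)/(K_p(V)+K_w(V))$, and substituting $M_S+M_I=(\nu/\mu_M)(L_p+L_w)$ reduces everything to an inequality in the single quantity $L_p+L_w$. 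I expect the delicate point to be extracting the correct sign of this expression from the remaining constraints of $\mathcal{K}$, balancing the recruitment $\alpha(\nu/\mu_M)(L_p+L_w)$ against the logistic loss and the extra predation term $\mu_p(1-V)L_p$, which is where the model's structural hypotheses on the parameters must enter.

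Once every face is handled, boundedness in the compact set $\mathcal{K}$ rules out finite-time blow-up, so the standard continuation theorem extends each slab solution to the closed interval $[n\tau,(n+1)\tau]$. Applying the jump $V((n+1)\tau^+)=(1-\gamma((n+1)\tau))V((n+1)\tau)$ returns a state in $\mathcal{K}$, and induction on $n$, exactly as in Proposition~\ref{prop:veg_model_solutions}, produces a unique left-continuous Carath\'eodory solution defined on all of $[0,\infty)$ and remaining in $\mathcal{K}$, which is the assertion of positive invariance.
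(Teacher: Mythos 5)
Your overall architecture---solve slab by slab, check invariance face by face, continue on each compact slab, and glue across the cleaning times---is sound, and your slab-wise treatment (between impulses the full 7-dimensional system is autonomous, so Picard--Lindel\"of applies directly) is a legitimate alternative to the paper's route, which instead solves the $V$-equation first via Proposition~\ref{prop:veg_model_solutions} and then treats the remaining six equations as a Carath\'eodory system with measurable time dependence (Theorems~\ref{sec3:theo1}--\ref{sec3:theo3}). The genuine gap is exactly the step you flagged but did not complete: the oblique face $\{M_S+M_I=(\nu/\mu_M)(L_p+L_w)\}$. You correctly reduce subtangentiality there to $\dot L_p+\dot L_w\ge 0$, but no ``structural hypotheses on the parameters'' can deliver this sign: the inequality fails on part of that face for \emph{every} choice of positive parameters. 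Concretely, take the point of $\mathcal{K}$ with $V=1$, $L_p=K_p(1)=\max_{[0,1]}K_p$, $L_w=K_w(1)=\max_{[0,1]}K_w$, $M_S+M_I=(\nu/\mu_M)\bigl(K_p(1)+K_w(1)\bigr)$ and any $S+I\le 1$. There both logistic factors $1-L_p/K_p(V)$ and $1-L_w/K_w(V)$ vanish, so
\begin{equation*}
\dot L_p+\dot L_w=-(\nu+\mu_L)\bigl(K_p(1)+K_w(1)\bigr)<0,
\qquad
\frac{d}{dt}(M_S+M_I)=\nu(L_p+L_w)-\mu_M(M_S+M_I)=0,
\end{equation*}
hence $\frac{d}{dt}\bigl[M_S+M_I-(\nu/\mu_M)(L_p+L_w)\bigr]>0$ and the trajectory exits $\mathcal{K}$ instantly. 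In other words, the set $\mathcal{K}$ of \eqref{sys3compact} is not forward invariant as written (the paper's one-line assertion that invariance ``follows easily'' glosses over this same face), so your plan, carried out honestly, cannot close at this point.

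The repair is to make the mosquito bound flat rather than tied to the current larval load: replace the constraint $M_S+M_I\le(\nu/\mu_M)(L_p+L_w)$ by $M_S+M_I\le(\nu/\mu_M)\bigl(K_p(1)+K_w(1)\bigr)$, keeping $L_p\le K_p(1)$, $L_w\le K_w(1)$. On the new $M$-face one has, using only the $L$-constraints,
\begin{equation*}
\frac{d}{dt}(M_S+M_I)=\nu(L_p+L_w)-\mu_M(M_S+M_I)\le \nu\bigl(K_p(1)+K_w(1)\bigr)-\nu\bigl(K_p(1)+K_w(1)\bigr)=0,
\end{equation*}
and all your remaining face checks (coordinate hyperplanes, $V\le1$, conservation of $S+I$, the caps on $L_p$ and $L_w$) were already correct. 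With this corrected compact invariant region, your continuation argument and the induction across impulse times go through verbatim, yielding the unique global bounded solution; this proves the proposition with $\mathcal{K}$ replaced by the genuinely invariant set.
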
    
    
\begin{proof}
The equation for the vegetation can be solved independently of the rest of the system, and gives a solution $V$ that is left-continuous and bounded (thanks to Proposition \ref{prop:veg_model_solutions}). Inserting this solution in the equations for $M_S,M_I,L_p,L_w,S,I$ yields a system that satisfies the conditions of Theorem \ref{sec3:theo1} in the Appendix \ref{Ch3:sec2}, which guarantees local existence.

   Proof of positive invariance of $\mathcal{K}$ follows easily by checking that, when a trajectory issuing from $\mathcal{K}$ hits the boundary of $\mathcal{K},$ the dynamics pushes the trajectory inwards the set.
    
    The existence of a global solution is a consequence of Theorems \ref{sec3:theo1}-\ref{sec3:theo3} in Appendix \ref{Ch3:sec2}.
\end{proof}

\section{Asymptotic behaviour}\label{modeling:analysis}

   In this section we study the asymptotic behaviour of system \eqref{sec2:sys3}. We split this analysis in three parts. In Subsection \ref{Asy_Beh_Veg_SSct} our focus is the vegetation system \eqref{Sec2:Sys2}. We prove, in Proposition \ref{ch2:sec3:prop1}, that if cleaning at times $t=n\tau$ tends to a constant value then border vegetation approaches a $\tau$-periodic solution. Subsection \ref{Mos_Dyn_SSct} is dedicated to the study of the behaviour of mosquito population when border vegetation is periodic, and \ref{Dis_Beh_SSct} investigates malaria incidence under periodic vegetation cleaning. \sol{By applying the results on cooperative systems stated in Appendix \ref{appendix_cooperative},}  we derive sufficient conditions for the trajectories of \eqref{sec2:sys3} to asymptotically approach  either a positive solution or the origin. 
   
   \sol{At this point, it is worth commenting that the main interest of our research is finding conditions under which malaria is eradicated and these are established in Theorem \ref{Theo 3.3}. Additionally, Theorem \ref{Theo 3.2}  states conditions for mosquito population elimination. 
   The conditions in Theorem \ref{Theo 3.2} turn out to require quite more complex expressions than the ones in Theorem \ref{Theo 3.3}, but they are relevant from a theoretical point of view and the expressions can be simplified in particular cases.} 

\subsection{Asymptotic behaviour of vegetation}\label{Asy_Beh_Veg_SSct}

We focus now on the vegetation model \eqref{Sec2:Sys2}. We aim to prove that if cleaning is done every $\tau$ days and approaches a constant value, then the vegetation converges to a periodic solution that is discontinuous at the cleaning times. We get the following result.

\begin{proposition}[Periodic solutions for vegetation]\label{ch2:sec3:prop1}\\
Let $\gamma^* \in [0,1]$,
and assume that $\gamma\colon [0,+\infty) \to [0,1]$ is such that $\displaystyle\lim_{t \to \infty} \gamma(t) = \gamma^*$.
Then any trajectory $V$ of \eqref{Sec2:Sys2}  with initial condition in $[0,1]$ converges to the
 periodic solution
\begin{equation}\label{ch2:sec3:Vper_sol}
\begin{split}
    V_{\rm per}(t) &\coloneqq 1 - \frac{\gamma^* e^{-r(t - n\tau)}}{1 - (1 - \gamma^* )e^{-r\tau}},\quad  \text{for}\ t \in [n\tau, (n+1)\tau),\\
\end{split}
\end{equation}
   in the following sense:
\begin{equation}
    \begin{split}
        \max_{t \in [n\tau,(n + 1)\tau)}  |V(t) - V_{\rm per}(t)| \to 0,\quad \text{as}\ n \to \infty. \\
    \end{split}
\end{equation} 
\end{proposition}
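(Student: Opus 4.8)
The plan is to collapse the impulsive dynamics onto a one-dimensional recursion for the post-cleaning values, analyse that recursion as an asymptotically autonomous contraction, and then lift the resulting convergence back to continuous time. First I would use the explicit integration of the ODE between consecutive cleanings (as in Proposition \ref{prop:veg_model_solutions}): writing $v_n \coloneqq V(n\tau^+)$ for the value immediately after the $n$-th cleaning and $a \coloneqq e^{-r\tau} \in (0,1)$, the flow on $(n\tau,(n+1)\tau]$ is $V(t) = 1 - (1-v_n)e^{-r(t-n\tau)}$, so that $V((n+1)\tau^-) = 1 - (1-v_n)a$, and applying the jump rule with $\gamma_{n+1}\coloneqq\gamma((n+1)\tau)$ yields
\begin{equation}\label{plan:rec}
v_{n+1} = (1-\gamma_{n+1})\bigl(1 - (1-v_n)a\bigr).
\end{equation}
This is an affine recursion whose coefficients depend on $n$ only through $\gamma_{n+1}$.

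Next I would identify the limiting dynamics. As $\gamma_n\to\gamma^*$, \eqref{plan:rec} is asymptotically governed by the affine map $g(v)\coloneqq(1-\gamma^*)(1-(1-v)a)$, whose slope $(1-\gamma^*)a\le a<1$ makes it a strict contraction. Its unique fixed point is
\[
v_* = \frac{(1-\gamma^*)(1-a)}{1-(1-\gamma^*)a},
\]
and a direct computation gives $1-v_* = \gamma^*/\bigl(1-(1-\gamma^*)a\bigr)$, so that $1-(1-v_*)e^{-r(t-n\tau)}$ coincides with $V_{\rm per}(t)$ on $[n\tau,(n+1)\tau)$; in particular $v_*=V_{\rm per}(n\tau)$ is exactly the post-cleaning value of the candidate periodic solution.

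The hard part will be showing $v_n\to v_*$, that is, that the asymptotically autonomous contraction still converges to the fixed point of its limit map. Subtracting the fixed-point relation $v_*=(1-\gamma^*)(1-(1-v_*)a)$ from \eqref{plan:rec} and adding and subtracting $(1-\gamma_{n+1})(1-(1-v_*)a)$ gives
\[
v_{n+1}-v_* = (1-\gamma_{n+1})\,a\,(v_n-v_*) + (\gamma^*-\gamma_{n+1})\bigl(1-(1-v_*)a\bigr),
\]
hence $|v_{n+1}-v_*|\le a\,|v_n-v_*|+\epsilon_{n+1}$ with $\epsilon_{n+1}\coloneqq|\gamma^*-\gamma_{n+1}|\,\bigl(1-(1-v_*)a\bigr)\to 0$. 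Since $(v_n)$ is bounded in $[0,1]$, taking $\limsup$ yields $L\le aL$ for $L\coloneqq\limsup_n|v_n-v_*|$, forcing $L=0$; alternatively, iterating gives $|v_n-v_*|\le a^n|v_0-v_*|+\sum_{k=0}^{n-1}a^{n-1-k}\epsilon_{k+1}$, where the first term is geometric and the second is the convolution of a summable sequence with a null sequence, so both vanish.

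Finally I would lift the result to continuous time. On each interval the two solutions share the same exponential profile, so
\[
|V(t)-V_{\rm per}(t)| = |v_n-v_*|\,e^{-r(t-n\tau)} \le |v_n-v_*|,\qquad t\in(n\tau,(n+1)\tau),
\]
whence $\max_{t\in[n\tau,(n+1)\tau)}|V(t)-V_{\rm per}(t)|\le|v_n-v_*|\to 0$, which is the claim. I would flag one bookkeeping subtlety: $V$ is left-continuous whereas the formula for $V_{\rm per}$ is evaluated from the right at the cleaning instants, so the comparison at $t=n\tau$ is understood with $V_{\rm per}$ taken right-continuous there (equivalently, on the open subinterval); the interior estimate above controls everything else uniformly.
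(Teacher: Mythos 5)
Your proposal is correct, and its core reduction coincides with the paper's: the paper proves this proposition by a direct appeal to Lemma \ref{Ch2:sec3:lem1} in Appendix \ref{appendix_impulsive} (setting $\gamma_n \coloneqq \gamma(n\tau)$), and that lemma's proof performs precisely your first step --- pass to the post-cleaning values, derive the affine recursion \eqref{Ch2:sec3:recurrence_equation}, and identify the fixed point \eqref{fixed_point_eq} of the limiting contraction $f$. The genuine difference is how the asymptotically autonomous case $\gamma_n \to \gamma^*$ is closed: the paper views $v_n$ as an infinite composition $f_n \circ \cdots \circ f_1$ of varying contractions and cites \cite{lorentzen1990compositions} (after checking that $f_n \to f$ uniformly and that the fixed points $u_n^*$ converge to $u^*$), whereas you prove convergence directly from the perturbed-contraction inequality $|v_{n+1}-v_*|\le a\,|v_n-v_*|+\epsilon_{n+1}$ with $\epsilon_n \to 0$, closed by a limsup or discrete-convolution argument. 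Your route is self-contained and elementary, yields an explicit error bound $a^n|v_0-v_*|+\sum_{k=0}^{n-1} a^{n-1-k}\epsilon_{k+1}$, and covers every $\gamma^*\in[0,1]$ (the paper's argument is phrased for $\gamma>0$); the paper's route is shorter on the page at the price of outsourcing the crux to an external theorem whose hypotheses must be verified. Your continuous-time lifting, $|V(t)-V_{\rm per}(t)| = |v_n-v_*|\,e^{-r(t-n\tau)}$, is the same implicit step the paper uses, and your remark on the continuity convention at $t=n\tau$ is well taken: since $V$ is left-continuous while \eqref{ch2:sec3:Vper_sol} assigns $V_{\rm per}$ its post-cleaning value at $n\tau$, the maximum in the statement must indeed be read with a consistent (post-jump) convention at that single point, an issue the paper's proof passes over in silence.
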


Proposition \ref{ch2:sec3:prop1} follows from  a direct application of Lemma \ref{Ch2:sec3:lem1} in the Appendix \ref{appendix_impulsive}
 by setting $\gamma_n \coloneqq \gamma(n \tau)$. We conclude that, under periodic cleaning, the border vegetation approaches a periodic behaviour. Figure \ref{modeling:Veg_multiple_x0} illustrates Proposition \ref{ch2:sec3:prop1}, by showing the solutions to the impulsive initial value problem \eqref{Sec2:Sys2} for a range of initial values in $[0,1]$ and for values of \fel{$\gamma_n = (0.65 + (1 - 0.65)\frac{5}{n\tau})$,} so that the values of $\gamma_n$ converge to $0.65$. Note that at each point of discontinuity, the trajectories come closer together, and asymptotically approach  $V_{\rm per}$.

\begin{figure}[ht]
\begin{center}
\indent{\centering\includegraphics[width=0.8\linewidth]{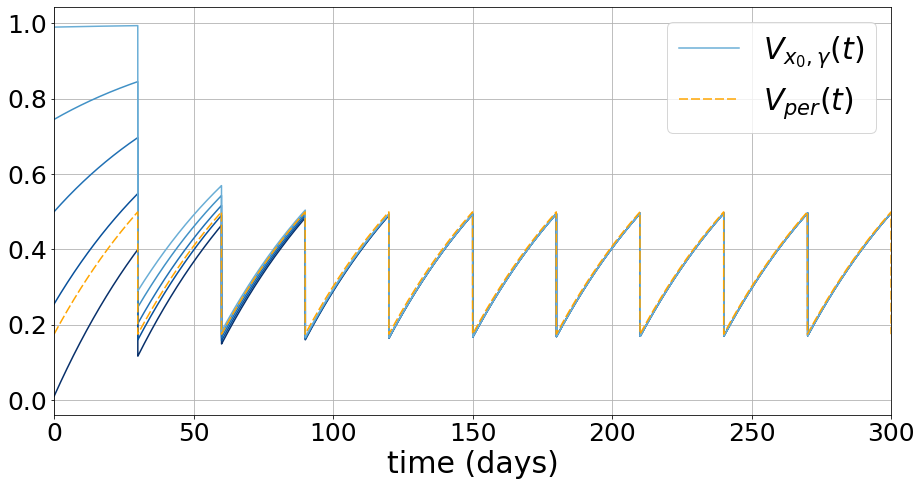}}\\
  \indent 
\caption[Impulsive behaviour of {system} \eqref{Sec2:Sys2}.]{Impulsive behaviour of {system} \eqref{Sec2:Sys2}, for different initial values. $V_{\rm per}$ is shown in orange.}\label{modeling:Veg_multiple_x0}
\end{center}
\end{figure}



\subsection{Asymptotic behaviour of mosquito population}\label{Mos_Dyn_SSct}
We now turn to the analysis of the asymptotic behaviour of the mosquito population, {\em assuming that border vegetation {clearing} is periodic} and equal to $V_{\rm per}$. Our aim is to derive sufficient conditions on the parameters of the system for the mosquito population to either converge to $0$, or have asymptotic periodic behaviour with positive values.

Recalling system \eqref{sec2:sys3},
there are two aspects to consider. First, that {the periodic} vegetation {clearing} behaviour impacts the system through  {the terms}
\[
\fel{\tilde K_w(t) \coloneqq K_w(V_{\rm per}(t)),\quad 
\tilde K_p(t) \coloneqq K_p(V_{\rm per}(t))}
\]
which are also periodic.  Second, by setting $M \coloneqq M_S + M_I,$ we can isolate mosquito population dynamics in the following subsystem:
\begin{equation}
\label{system 3.2}
\begin{split}
     {\dot L_p} &= \alpha \frac{{\tilde K}_p}{{\tilde K}_w + {\tilde K}_p} M\left(1 - \frac{L_p}{{\tilde K}_p} \right) - (\nu + \mu_L + \mu_p(1 - V_{\rm per}))L_p,\\
     {\dot L_w} &= \alpha \frac{{\tilde K}_w}{{\tilde K}_w + {\tilde K}_p} M \left(1 - \frac{L_w}{{\tilde K}_w} \right) - (\nu + \mu_L)L_w,\\
     {\dot M} &= \nu (L_p + L_w) - \mu_M M.
\end{split}
\end{equation}
      Let us write the system in the compact form ${\dot X}(t) = F(t,X(t))$ for $X:=(L_p,L_w,M)^\top$, $F$ being the function on the r.h.s. of \eqref{system 3.2}, and let us consider  the set 
\begin{multline}\label{Kaex}
    \mathcal{K}_X \coloneqq \big\{(t,L_p,L_w,M) \in \mathbb{R}^4_+ : \\
    \sol{M \leq \frac{\nu}{\mu_M}\Big(\max_{[0,\tau)}{\tilde K}_p(t) + \max_{[0,\tau)}{\tilde K}_w(t)\Big),\, L_p  \leq \tilde K_p(t),\, L_w  \leq \tilde K_w(t)
    \big\}},
\end{multline}
     \fel{on which $F$ is cooperative.} Due to the discontinuities and periodicity of $V_{\rm per}$, the function $F$ itself is piecewise-continuous  and $\tau$-periodic both w.r.t. the time variable $t$.
 \fel{  The difficulty in analyzing system \eqref{system 3.2} arises when the discontinuities of $V_{\rm per}$ at the times $n\tau$ may steer the state of the system out of the set $\mathcal{K}_X$ where it is cooperative.}
Before stating the main result of this section, let us recall the formula for the {\em basic offspring number}
$$    
    \mathcal{N} \coloneqq \frac{ \alpha \nu}{\mu_M (\nu + \mu_L)},  
$$    
    that represents the {\em average number of offspring that an individual produces during its lifespan} \cite{yang2009assessing, ferreira2014ecological}, and is considered a measure of the growth of a population. Additionally, \fel{let 
\begin{equation}
    \label{Np}
    \mathcal{N}_p(t) \coloneqq \frac{\nu \alpha}{\mu_M\big(\nu + \mu_L + \mu_p(1 - V_{\rm per}(t))\big)}
\end{equation} }
\sol{be a ``modified'' basic offspring number.}

\begin{theorem}[Asymptotic behaviour of the mosquito population]\label{Theo 3.2}
    The following assertions \sol{on the trajectories of system \eqref{system 3.2}} hold.

\fel{\noindent \em \textit{(i) If }}
\begin{equation}\label{conditions_for_0}
\begin{aligned}
\sol{ \mathcal{N} \max_{t\in [0,\tau)} {\frac{{\tilde K}_w}{ {\tilde K}_w + {\tilde K}_p}} + \mathcal{N}_p(\tau^-) \max_{t\in [0,\tau)}{\frac{{\tilde K}_p}{ {\tilde K}_w + {\tilde K}_p}} \leq 1}
\end{aligned}
\end{equation}
\fel{\textit{is satisfied, then the trajectories of \eqref{system 3.2} asymptotically approach the origin for any initial condition in $\mathcal{K}_X$.}} 

\fel{\noindent  (ii)  If}
\begin{equation}\label{conditions_for_per}
\sol{
 \frac{\min_{[0,\tau)}{\left( \frac{ {\tilde K}_w (t)}{ {\tilde K}_w(t) + {\tilde K}_p (t)} \right)}}{\mathcal{N}^{-1} + \frac{ {\tilde K}_w (\tau^-) + {\tilde K}_p(\tau^-) }{{\tilde K}_w(0) + {\tilde K}_p(0)}} + 
 \frac{\min_{[0,\tau)} \frac{ {\tilde K}_p (t)}{ {\tilde K}_w(t) + {\tilde K}_p (t)}}{\mathcal{N}_p(0)^{-1} + \frac{ {\tilde K}_w (\tau^-) + {\tilde K}_p(\tau^-) }{{\tilde K}_w(0) + {\tilde K}_p(0)}} > 1
}
\end{equation}
\fel{is satisfied, then for all initial conditions in $\mathcal{K}_X\backslash \{0\}$, solutions of \eqref{system 3.2} are bounded above and below by asymptotically periodic solutions.}
\end{theorem}

\begin{proof}
\sol{This proof is based on the application of Theorem \ref{Ch3:Main_Theo} in the Appendix (see also Remark \ref{RemThms}), which follows from results on cooperative system with nonlinear concavities proved by Jiang \cite{jifa1993cooperative} and Smith \cite{smith1986cooperative}}.

\fel{As mentioned above, the difficulty in dealing with system \eqref{system 3.2} is that the jumps in ${\tilde K}_p,{\tilde K}_w$ may stir the trajectories outside the set where the equation is cooperative. We can write  \eqref{system 3.2} as 
\[ \dot x(t) = F(t,x) =  A(t) x(t) - f(t,x),\]
where $x(t) = (L_p(t), L_w(t), M(t))^T$,
\[A(t) \coloneqq \begin{pmatrix}
    - (\nu + \mu_L + \mu_p(1 - V_{\rm per}))  & 0 &  \alpha \frac{{\tilde K}_p}{{\tilde K}_w + {\tilde K}_p} \\
    0 & -(\nu + \mu_L) & \alpha \frac{{\tilde K}_w}{{\tilde K}_w + {\tilde K}_p} \\
    \nu & \nu & -\mu_M
\end{pmatrix},\]
and $f(t,x) \coloneqq  \frac{\alpha}{{\tilde K}_w + {\tilde K}_p} (L_p M, L_w  M, 0)^T$, which is non-negative over $\mathcal{K}_X$. Note that $A(t)$ has non-negative off diagonal terms.
Consider the system 
\[ \dot z(t) = A(t)z(t).\]
We have that latter system is cooperative  in $\mathcal{K}_X$, and that 
\[
A(t)x \geq A(t)x - f(t,x),\quad \text{for all }(t,x) \in \mathcal{K}_X.
\] 
Moreover, both systems (for $x$ and $z$) satisfy the uniqueness property for initial value problems. Therefore we can apply Kamke's Theorem \cite{smith1986cooperative,jifa1993cooperative} to conclude that, for the same initial condition $z(0) = x(0) = x_0$, the corresponding trajectories satisfy $z(t) \geq x(t)$.}

\sol{At this point, let us recall the notation $\overline g,$ $\underline g$, $\overline A$ and $\underline A$ introduced in the notation Section \ref{SecNotation}.}
\fel{Now, from \cite[Proposition 5.1]{jifa1993cooperative}, we know that the Floquet multiplier of maximal modulus $\lambda$ of $A(t)$ satisfies $\exp(\tau s(\underline{A} )) \leq \lambda \leq \exp(\tau s({\overline A}))$, whereas from \cite[Theorem 5.1]{jifa1993cooperative} we know that for a cooperative matrix $M$, $s(M) > 0$ if, and only if, there exists a negative principal minor of $-M$. We conclude that, if all principal minors of $-{\overline A}$ are non-negative, then $s(\overline A) < 0$ and $\lim_{t\to\infty} z(t) = 0$, which implies that $\lim_{t\to\infty} x(t) = 0$. 
For the principal minors we get the following expressions:}
\begin{equation*}
\fel{
\begin{aligned}
& \bar p_1 = \left( \nu + \mu_L + \mu_p\Big(1-V(\tau^-)\Big)\right) \left(\nu + \mu_L \right)
\mu_M \\
& \quad - \alpha \nu \left(\nu + \mu_L  \right) \overline{\frac{ {\tilde K}_p}{{\tilde K}_p + {\tilde K}_w}}
 - \alpha \nu  \left(\nu + \mu_L + \mu_p\Big(1-V_{\rm per}(\tau^-)\Big) \right)  \overline{\frac{ {\tilde K}_w}{{\tilde K}_p + {\tilde K}_w}}
,\\
& \bar p_2 = \mu_M \left(\nu + \mu_L  \right) - \alpha \nu  \overline{\frac{ {\tilde K}_w}{{\tilde K}_p + {\tilde K}_w}},\\
& \bar p_3 = \mu_M 
\left( \nu + \mu_L + \mu_p\Big(1-V_{\rm per}(\tau^-)\Big)\right) 
- \alpha \nu  \overline{\frac{ {\tilde K}_p}{{\tilde K}_p + {\tilde K}_w}}, \\
& \bar p_4 = 
\left( \nu + \mu_L + \mu_p\Big(1-V_{\rm per}(\tau^-)\Big)\right)
\left(\nu + \mu_L \right),\\
&\bar p_5 = \left( \nu + \mu_L + \mu_p\Big(1-V_{\rm per}(\tau^-)\Big)\right),
\ \bar p_6 = \left(\nu + \mu_L \right), 
\ \bar p_7 = \mu_M.
\end{aligned}
}
\end{equation*}
\fel{Principal minors $\bar p_4, \bar p_5, \bar p_6$ and $\bar p_7$ are trivially non-negative. We want conditions that ensure that $p_1 \geq 0$, $p_2 \geq 0$ and $p_3 \geq 0$. We find that}
\fel{\begin{equation*}
\begin{split}
    \bar p_1 &= -\overline A_{1,1}{\bar p_2} + {\overline A}_{2,2} {\overline A}_{1,3} {\overline A }_{3,1} =  -\overline A_{2,2}{\bar p_3} + {\overline A}_{1,1} {\overline A}_{2,3} {\overline A }_{3,2} \\ 
    &= -\overline A_{1,1}{\bar p_2} -\overline A_{2,2}{\bar p_3} + {\overline A }_{1,1}{\overline A }_{2,2}{\overline A }_{3,3}
\end{split}
\end{equation*}}
\fel{From latter equation, we have that $\bar p_1 \leq -\overline A_{1,1}\bar p_2$, from which follows that $\bar p_1 \geq 0 \Rightarrow \bar p_2 \geq 0$. Analogously, $\bar p_1 \geq 0 \Rightarrow \bar p_3 \geq 0$. Therefore we only need to impose $\bar p_1 \geq 0$. This gives}
\begin{equation*}
\begin{gathered}
\fel{\frac{\bar p_2}{{\overline A}_{2,2}\ {\overline A}_{3,3}} + \frac{\bar p_3}{{\overline A}_{1,1}\ {\overline A}_{3,3}} \geq 1.}
\end{gathered}
\end{equation*}
\fel{If we define $\mathcal{N}_p$ as in \eqref{Np},
then we can write latter condition as} \eqref{conditions_for_0}.

\fel{Now, consider the matrix
\[
B \coloneqq \frac{\alpha \nu}{\mu_M}\frac{{\tilde K}_w (\tau^-) + {\tilde K}_p(\tau^-)}{{\tilde K}_w(0) + {\tilde K}_p(0)}   \begin{pmatrix}
    1  & 0 & 0 \\
    0 & 1 & 0 \\
     0 & 0 & 0\\
\end{pmatrix}.
\]
Note that $f(t,x) \leq Bx$, so $A(t)x - Bx \leq A(t)x - f(t,x)$. If we define $w$ such that $\dot w = (A(t) - B)w$, $w(0) = x(0)$, then $w$ satisfies a cooperative differential equation, and by Kamke's Theorem $w(t) \leq x(t)$. We can apply the same reasoning as before to conclude that if at least one principal minor of $-\big( \underline{ A - B} \big)$ is negative, \sol{then the trajectories of the system in $w$ approach a strictly positive periodic solution which bounds $x$ from below}. The principal minors of $-\big(\underline{ A - B}\big)$ are: }
\begin{equation*}
\resizebox{\textwidth}{!}{\fel{$
\begin{aligned}
& \underline{p_1} = \left( \nu + \mu_L + \mu_p\Big(1-V_{\rm per}(0)\Big) + \frac{\alpha \nu}{\mu_M}\right) \left(\nu + \mu_L + \frac{\alpha \nu}{\mu_M}\right)
\mu_M \\
& \quad - \alpha \nu \left(\nu + \mu_L  + \frac{\alpha \nu}{\mu_M}\right) \underline{\frac{ {\tilde K}_p}{{\tilde K}_p + {\tilde K}_w}}
 - \alpha \nu  \left(\nu + \mu_L + \mu_p\Big(1-V_{\rm per}(0)\Big)  + \frac{\alpha \nu}{\mu_M} \right) \underline{ \frac{{\tilde K}_w}{{\tilde K}_p + {\tilde K}_w}}
,\\
& \underline{p_2} = \mu_M \left(\nu + \mu_L + \frac{\alpha \nu}{\mu_M} \right) - \alpha \nu \underline{ \frac{{\tilde K}_w}{{\tilde K}_p + {\tilde K}_w}},\\
& \underline{p_3} = \mu_M 
\left( \nu + \mu_L + \mu_p\Big(1-V_{\rm per}(0)\Big) + \frac{\alpha \nu}{\mu_M}\right) 
- \alpha \nu \underline{ \frac{{\tilde K}_p}{{\tilde K}_p + {\tilde K}_w}}, \\
& \underline{p_4} = 
\left( \nu + \mu_L + \mu_p\Big(1-V_{\rm per}(0)\Big) + \frac{\alpha \nu}{\mu_M}\right)
\left(\nu + \mu_L + \frac{\alpha \nu}{\mu_M} \right),\\
& \underline{p_5} = \left( \nu + \mu_L + \mu_p\Big(1-V_{\rm per}(0)\Big) + \frac{\alpha \nu}{\mu_M}\right),
\quad \underline{p_6} = \left(\nu + \mu_L + \frac{\alpha \nu}{\mu_M} \right), 
\ \underline{p_7} = \mu_M.
\end{aligned}$}}
\end{equation*}
\fel{Denoting terms of the matrix $\underline{A - B}$ by $\underline{c}_{i,j}$, we have that
\[\underline{p}_1 = -\underline{c}_{1,1} \underline{p}_2 + \underline{c}_{2,2} \underline{c}_{1,3}\underline{c}_{3,1} = -\underline{c}_{2,2} \underline{p}_3 + \underline{c}_{1,1}\underline{c}_{2,3}\underline{c}_{3,2}
= -\underline{c}_{1,1} \underline{p}_2 - \underline{c}_{2,2} \underline{p}_3 + \underline{c}_{1,1}\underline{c}_{2,2}\underline{c}_{3,3}\]
From the latter equation, we have that $\underline{p}_1 \leq \underline{p}_2$ and $\underline{p}_1 \leq \underline{p}_3$, so ``at least one negative principal minor" is equivalent to $\underline{p}_1 < 0$, from which we have}
\begin{equation*}
\begin{gathered}
\fel{\frac{\underline{p}_2}{\underline{c}_{2,2}\ \underline{c}_{3,3}} + \frac{\underline{p}_3}{\underline{c}_{1,1}\ \underline{c}_{3,3}} < 1.}
\end{gathered}
\end{equation*}
\fel{Writing latter condition in full} \sol{leads to} \eqref{conditions_for_per}.
\fel{Note that the conditions for $w$ to be asymptotically periodic imply that $z$ is also asymptotically periodic, so in this case $x$ is bounded above and below by asymptotically periodic functions, as stated in the theorem.}
\end{proof}
\begin{remark}
    \fel{The numerical simulations in Section \ref{Ch2:NumSimSection} show that system \eqref{system 3.2} has, in general, asymptotic positive periodic behaviour.}
\end{remark}
\if{
First, some comments about Theorem \ref{Theo 3.2}. All the conditions depend on the basic offspring number $\mathcal{N}$, and they cannot be simplified further, as the maxima (or minima) of $\displaystyle\frac{K_w}{K_p + K_w}(V_{\rm per})$ and of $\displaystyle\frac{K_p}{K_p + K_w}(V_{\rm per})$ may not occur at the same time $t$. If they actually occur at the same time $t$, then the expressions in \eqref{Theo 3.2 C1} reduces to $1$.

\eqref{Theo 3.2 C1} and \eqref{Theo_3.2_Cond_2} do not exhaust the possibilities for system, as $\mathcal{N}^{-1}$ can be larger than the sum of minima in \eqref{Theo_3.2_Cond_2} and smaller than the sum of maxima in \eqref{Theo 3.2 C1}. Numerical simulations in the current section illustrate that in the latter case, the mosquito population assume periodic behaviour close to $0$.
    
For the {\em periodic mosquito population} $M_{\rm per}$, we do not have an analytical expression as we have for $V_{\rm per}$ in Proposition \ref{ch2:sec3:prop1}. We are specially concerned with the effects of a reduction in the period $\tau$, which means an increase in cleaning frequency. This reduction in period alters the value of the right-hand side in both \eqref{Theo 3.2 C1} and \eqref{Theo_3.2_Cond_2}, by restricting the values of \textit{border vegetation} $V_{\rm per}$ to lower values (as the period $\tau$ reduces, vegetation has less time to grow between cleaning episodes). Even if the reduction in period does not change the behaviour from asymptotical periodicity to asymptotical extinction, it does reduce the average size of the mosquito population, which has attenuating effects on the incidence of the disease.

We now prove the following corollary to Theorem \ref{Theo 3.2}.
\begin{corollary}\label{analysis:corollary}
Assume that $K_w$ and $K_p$ are proportional, that is
$$
K_p/K_w \equiv \rho,
$$
with $\rho$ constant.
Then the conditions for Theorem \ref{Theo 3.2} are simplified to
\begin{enumerate}
    \item If $\mathcal{N} \leq 1$, then $\lim_{t \to +\infty} X(t) = 0$ for all feasible initial conditions.
    \item If $\mathcal{N} > 1 + \rho$, then there is a strictly positive periodic solution which attracts all initial conditions.
\end{enumerate}
\end{corollary}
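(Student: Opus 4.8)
The plan is to substitute the proportionality hypothesis $K_p = \rho K_w$ directly into the four inequalities of Theorem~\ref{Theo 3.2} and read off the simplified thresholds. The decisive observation is that proportionality forces the two logarithmic-derivative terms to coincide, since
\[
\frac{K_p'(V_{\rm per})\dot V_{\rm per}}{K_p(V_{\rm per})}=\frac{d}{dt}\log K_p(V_{\rm per})=\frac{d}{dt}\log K_w(V_{\rm per})=\frac{K_w'(V_{\rm per})\dot V_{\rm per}}{K_w(V_{\rm per})},
\]
and that the carrying-capacity fractions collapse to the constants $K_p/(K_p+K_w)=\rho/(1+\rho)$ and $K_w/(K_p+K_w)=1/(1+\rho)$. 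Treating $K_p$ and $K_w$ as border-vegetation-independent proportions (constant in $t$), we have $K_p'=K_w'=0$, so every $\dot V_{\rm per}$-term drops out, and moreover $\min_{[0,\tau)}K_w=\max_{[0,\tau)}K_w=K_w$, with $\min(K_p+K_w)=\max(K_p+K_w)=(1+\rho)K_w$ and $\max K_p=\rho K_w$.

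For item~1 I would substitute into the two conditions of Theorem~\ref{Theo 3.2}(i). After cancelling the common factor $K_w$ they become
\[
\frac{1+\rho}{\rho}\cdot\frac{\nu+\mu_L+\mu_p\big(1-\min_{[0,\tau)}V_{\rm per}\big)}{\nu+\mu_L}\ \geq\ \mathcal{N}
\qquad\text{and}\qquad
1+\rho\ \geq\ \mathcal{N}.
\]
Both left-hand sides are at least $1$: the second because $\rho>0$, and the first because $\tfrac{1+\rho}{\rho}\geq 1$ while the mortality ratio is $\geq 1$ (the term $\mu_p(1-\min_{[0,\tau)}V_{\rm per})$ is nonnegative). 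Hence $\mathcal{N}\leq 1$ forces both inequalities simultaneously, and Theorem~\ref{Theo 3.2}(i) yields convergence of every trajectory in $\mathcal{K}_X$ to the origin.

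For item~2 it suffices to verify one of the two conditions of Theorem~\ref{Theo 3.2}(ii). The second of them reduces, after the same substitution and cancellation, to $1+\rho<\mathcal{N}$. Thus the assumption $\mathcal{N}>1+\rho$ activates that condition, and Theorem~\ref{Theo 3.2}(ii) provides a strictly positive $\tau$-periodic solution attracting all of $\mathcal{K}_X\setminus\{0\}$.

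The computations are routine algebra; the delicate points are bookkeeping which terms survive under proportionality and checking the direction of each inequality (in particular that the extra predation term $\mu_p(1-V_{\rm per})$ and the factor $(1+\rho)/\rho$ only push the extinction thresholds above $1$). The main obstacle, and the reason the hypotheses are stated this way, is that the argument for item~1 genuinely uses $K_p,K_w$ constant: were they allowed to vary with $V_{\rm per}$, one would have to control both $\max_{[0,\tau)}\tfrac{K_w'\dot V_{\rm per}}{K_w}$ and the ratio $\min_{[0,\tau)}K_w/\max_{[0,\tau)}K_w$, and the clean threshold $\mathcal{N}\leq 1$ would no longer be available. I would finally remark that the window $1<\mathcal{N}\leq 1+\rho$ is precisely where neither sufficient condition applies, matching the gap already present in Theorem~\ref{Theo 3.2}.
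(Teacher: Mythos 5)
Your proof establishes the corollary only under a hypothesis that is strictly stronger than the one stated, and this is a genuine gap rather than a presentational choice. The corollary assumes that the \emph{ratio} $K_p/K_w\equiv\rho$ is constant; it does not assume that $K_p$ and $K_w$ are themselves constant, and in this model they cannot be: the standing assumption is that $K_p(V)$ and $K_w(V)$ are strictly increasing in $V$, and $V_{\rm per}$ is nonconstant whenever $\gamma^*>0$. The step where you ``treat $K_p$ and $K_w$ as border-vegetation-independent proportions (constant in $t$)'', so that $K_p'=K_w'=0$ and $\min_{[0,\tau)}K_w=\max_{[0,\tau)}K_w$, is exactly where the content of the corollary is assumed away; your closing remark that this is ``the reason the hypotheses are stated this way'' misreads the statement. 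Under the actual hypothesis your reductions fail in both items. For item 1, after substituting $K_p+K_w=(1+\rho)K_w$ and $\max K_p=\rho\max K_w$, the second condition of Theorem \ref{Theo 3.2}(i) reads
\begin{equation*}
\frac{\Bigl(\max_{t\in[0,\tau)}\frac{K_w'(V_{\rm per})\dot V_{\rm per}}{K_w(V_{\rm per})}+\nu+\mu_L\Bigr)(1+\rho)\,\min_{t\in[0,\tau)}K_w(V_{\rm per})}{(\nu+\mu_L)\,\max_{t\in[0,\tau)}K_w(V_{\rm per})}\;\geq\;\mathcal{N},
\end{equation*}
and its left-hand side need not be $\geq 1$: for $K_w(V)=Ce^{aV}$ (with $K_p=\rho K_w$ still admissible) the logarithmic-derivative term grows only linearly in $a$, while $\min K_w/\max K_w=e^{-a(\max V_{\rm per}-\min V_{\rm per})}$ decays exponentially, so the left-hand side tends to $0$ as $a\to\infty$. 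Hence $\mathcal{N}\leq 1$ does not imply the theorem's hypotheses, and item 1 cannot be obtained this way. For item 2 the implication even points the wrong way: under proportionality alone the second condition of Theorem \ref{Theo 3.2}(ii) has left-hand side
\begin{equation*}
\frac{\Bigl(\min_{t\in[0,\tau)}\frac{K_w'(V_{\rm per})\dot V_{\rm per}}{K_w(V_{\rm per})}+\nu+\mu_L\Bigr)(1+\rho)\,\max_{t\in[0,\tau)}K_w(V_{\rm per})}{(\nu+\mu_L)\,\min_{t\in[0,\tau)}K_w(V_{\rm per})}\;\geq\;1+\rho,
\end{equation*}
so satisfying the theorem's condition \emph{implies} $\mathcal{N}>1+\rho$, but assuming $\mathcal{N}>1+\rho$ does not activate the theorem, which is what you need.

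The paper's proof goes through precisely because it does not substitute into the form of the conditions you used: it substitutes the proportionality into conditions expressed solely through the fractions $K_w/(K_p+K_w)$ and $K_p/(K_p+K_w)$. Proportionality alone makes these fractions identically equal to the constants $1/(1+\rho)$ and $\rho/(1+\rho)$, even though $K_p$ and $K_w$ themselves vary with $V_{\rm per}$, and the extinction and persistence criteria then collapse in one line to $\mathcal{N}^{-1}\geq\frac{1}{1+\rho}+\frac{\rho}{1+\rho}=1$ and $\mathcal{N}>1+\rho$. So the obstacle you correctly flagged is real, but the correct resolution is not to strengthen the hypothesis to constant carrying capacities; it is to work with a formulation of the mosquito extinction/persistence conditions in which $K_p$ and $K_w$ enter only through their ratios, which is what the paper does.
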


\begin{proof}

From the proportionality of $K_w(V)$ and $K_p(V)$, we have
$$
    \frac{K_w}{K_p + K_w} (V) = \frac{1}{1 + \rho}, \quad
    \frac{K_p}{K_p + K_w} (V) = \frac{\rho}{1 + \rho}.
$$
As $\rho$ is constant, the expression in \eqref{Theo 3.2 C1} becomes
$$
\mathcal{N}^{-1} \geq \frac{1}{1 + \rho} + \frac{\rho}{1 + \rho} = 1,
$$
and the expression for \eqref{Theo_3.2_Cond_2} becomes
$$
1 + \rho < \mathcal{N},
$$
from which the statement of the corollary follows.
\end{proof}
}\fi

\subsection{Asymptotic behaviour of malaria}\label{Dis_Beh_SSct}

Following the discussion of the previous subsection, we now analyze the behaviour of the {\em infected} human and mosquito populations in the same manner. Assuming we are in situation (ii) of Theorem \ref{Theo 3.2}, let $M_{\rm per}$ be the periodic solution for the mosquito population. We get that the infected components $(M_I,I)$ follow the dynamics:
\begin{equation}\label{Ch2:Sec3:ssec3:eq1}
\begin{split}
    \dot M_I &= \beta_{hv}I (M_{\rm per} - M_I) - \mu_M M_I;\\
    \dot I &= \beta_{vh} (1 - I) M_I - \kappa I.
\end{split}
\end{equation}
Note that system \eqref{Ch2:Sec3:ssec3:eq1} is cooperative in the time-dependent domain 
\[
\mathcal{K}_Y = \{(t,M_I,I) \in \mathbb{R}^3_+ : M_I \leq M_{\rm per}(t),\, I \leq 1 \}.
\]
We can apply Theorem \ref{Ch3:Main_Theo} to arrive at the following result:

\begin{theorem}[Limit Behaviour of Disease]\label{Theo 3.3}
    
\sol{(i)}    If 
\begin{equation}\label{condi}
{
\max_{t \in [0,\tau)} \sqrt{\frac{\beta_{vh}\beta_{hv}M_{\rm per}(t)}{\kappa \mu_M }}\leq 1,}
\end{equation}
then the trajectories $(M_I,I)$ of \eqref{Ch2:Sec3:ssec3:eq1} verify $\displaystyle\lim_{t \to +\infty} (M_I,I)(t) = 0$ for all initial conditions in $\mathcal{K}_Y$.
    
\sol{(ii)} If 
\begin{equation}\label{condii}
    {\min_{t \in [0,\tau)} \sqrt{\frac{\beta_{vh}\beta_{hv} M_{\rm per}(t)}{\kappa \mu_M }} > 1,}
\end{equation}
then there exists a strictly positive periodic solution of \eqref{Ch2:Sec3:ssec3:eq1} which attracts all initial conditions in $\mathcal{K}_Y \backslash \{0\}.$
\end{theorem}

\begin{proof}
\sol{As done for previous theorem, this proof is based on the application of Theorem \ref{Ch3:Main_Theo} in the Appendix (see also Remark \ref{RemThms}).}

\fel{In contrast to Theorem  \ref{Theo 3.2}, there is no need to  \sol{split the analysis in two cases} for the trajectories of \eqref{Ch2:Sec3:ssec3:eq1} never leave the set $\mathcal{K}_Y$ for initial conditions  in $[0,M_{\rm per}(0)]\times [0,1]$ \sol{and then Theorem \ref{Ch3:Main_Theo} is applicable for all trajectories starting in $\mathcal{K}_Y$}. We can calculate A to be}
\begin{equation}
  \fel{A = 
  \begin{pmatrix}
       -\mu_M & \beta_{hv}M_{\rm per} \\
         \beta_{vh}  &  -\kappa
\end{pmatrix}
}
\end{equation}
The principal minors of $-\overline A$ are
\fel{
\begin{equation}
\begin{split}
    {\bar p}_1 &= \mu_M \kappa - \beta_{hv} \beta_{vh} {\overline{M}_{\rm per}};\\ 
  {\bar p}_2 &= \mu_M;\\
  {\bar p}_3 &= \kappa.
 \end{split}
\end{equation}
}
Both ${\bar p}_2$ and ${\bar p}_3$ are always non-negative. We will therefore focus on the condition that ${\bar p}_1 \geq 0$, from which we get
\[
     \frac{\beta_{hv} \beta_{vh} {\overline{M}_{\rm per}}}{\kappa \mu_M} \leq 1.
\]
If latter inequality is satisfied, then trajectories of \eqref{Ch2:Sec3:ssec3:eq1} converge to $0$ for all initial conditions{, as stated in item (i), noting that the square root does not change the inequality. A biological interpretation of the quantities that appear in \eqref{condi}-\eqref{condii} is given after this proof.}

On the other hand, if we consider the principal minors of $-\underline A$, we arrive at
\[
 \frac{\beta_{vh}\beta_{hv}\underline{M}_{\rm per}}{\kappa\mu_M} > 1.
\]
If latter inequality is satisfied, then there exists a strictly positive periodic solution of \eqref{Ch2:Sec3:ssec3:eq1} which attracts all feasible initial conditions, as stated {in item (ii)}.
\end{proof}

It is interesting to compare the {formulae \eqref{condi} and \eqref{condii} of the sufficient conditions} of Theorem \ref{Theo 3.3} to the {\em basic reproduction number} $\mathcal{R}_0$ for the disease \sol{(see \cite{lopez2002threshold})}, defined for a constant mosquito population $M$, which is given by
\[
{\mathcal{R}_0 \coloneqq \sqrt{\frac{\beta_{hv} \beta_{vh} M}{\kappa \mu_M}}
}. 
\]
This number is interpreted as the {\em average amount of new infections an infected person causes during one infectious period if introduced in a completely susceptible population}. In standard compartmental models (see  \cite{van2002reproduction}), when  $\mathcal{R}_0 > 1$, there is a locally asymptotically stable endemic equilibrium. Conversely, when $\mathcal{R}_0 < 1$ the disease goes extinct, with the infected state variables converging to zero. If we consider the {\em time dependent} reproduction number 
\begin{equation}\label{mathcalR_time_dependent}
{\mathcal{R}(t) \coloneqq \sqrt{\frac{\beta_{hv} \beta_{vh} M_{\rm per}(t)}{\kappa \mu_M \ }}
},
\end{equation}
which depends on the mosquito population at time $t$, our conditions \eqref{condi} and \eqref{condii} of Theorem \ref{Theo 3.3} would amount, respectively, to 
$$\max_{t\in[0,\tau)} \mathcal{R}(t) \leq 1$$ 
for convergence to $0,$ and to 
$$ \min_{t\in[0,\tau)} \mathcal{R}(t) > 1$$ 
for the disease to be endemic, with a globally attractive strictly positive periodic solution. 

\section{Numerical simulations}\label{Ch2:NumSimSection}

In this last section we present simulations to illustrate our model and results. In Table \ref{Parameter_table} we expose ranges of realistic values for the  parameters involved in system \eqref{sec2:sys3}. In Subsection \ref{scenario simulation section}, we present some scenarios for the possible asymptotic behaviours of the system, as predicted by Theorems \ref{Theo 3.2} and \ref{Theo 3.3}. Subsection \ref{analysis ssec} explores the conditions from Theorems \ref{Theo 3.2} and \ref{Theo 3.3}, analyzing the effect of the system's parameters on its asymptotic behaviour.

\subsection{Parameters and Survey}\label{parameter section}

The locality of M\^ancio Lima, in the state of Acre in northwestern Brazil, is an important malaria hotspot. There, malaria is strongly associated with fish farming \cite{Reis2015malj, Reis2015plos}. 
In this area, there are ponds with commercial fishes and ponds with natural fishes, {as well as }natural water bodies (creeks, swamps) {were {\em Anopheles darlingi} and other malaria vectors breed}.
In Table \ref{Parameter_table} we gathered realistic parameters for the situation in M\^ancio Lima. Some data was gathered in a visit to the region in 2018.

\begin{table}[ht]
    \centering
    \label{Parameter_table}
\resizebox{0.95\textwidth}{!}{
\begin{tabular}{cccccc}
 \hline
  {\bf Parameter} & \begin{tabular}{@{}c@{}}{\bf Value/} \\ {\bf Range}\end{tabular} & \begin{tabular}{@{}c@{}}{\bf Note/} \\ {\bf Ref.}\end{tabular} & {\bf Scenario 1} & {\bf Scenario 2} & {\bf Scenario 3} \\
 \hline
 \hline
    $r$    & $0.01666 \text{ day}^{-1} $ &\ref{commentr}  &  0.01666 & 0.01666 & 0.01666  \\
 
    $\tau$  & $30-60 \text{ days}$& \ref{commentr}  & 30 & 30 & 30  \\

    $\gamma(t)$  & $0.5-1\text{ dimensionless}$& \ref{commentr},\cite{Reis2015plos,Reis2015malj} & 0.65 & 0.65 & 0.65 \\
    
    $\alpha$  & $8.75 - 43.66 \text{ day}^{-1}$& \cite{Santos1981,robert1983field}& {43.66} & {43.66}  &  {8.75}  \\
    
    $\nu$  & $1/(15.6 \pm 2.86)\text{ day}^{-1}$& \cite{Santos1981}& {0.0641} & {0.0641} & {0.0541}  \\
    
    $\mu_L$  & $0.51 - 0.79\text{ day}^{-1}$& \cite{villarreal2015establishment}& {0.62}  & {0.62}  &{0.99} \\ 
    
    $\mu_p$  & $0.11 - 0.39\text{ day}^{-1}$ &\ref{commentmup} & 0.31 & 0.31 & 0.31 \\
    
    $\mu_M$  & $0.089-0.476 \text{ day}^{-1}$ &\cite{Barros2011}& {0.089} & {0.16} & {0.8}  \\
    
    $\kappa$  & $0.024-0.16  \text{ day}^{-1}$ &\ref{commentkappa} & 0.05   & 0.05   & 0.05 \\
    
    $\beta_{vh}$  & $0.02 - 0.25 \text{ day}^{-1}$& \cite{mandal2011mathematical}& {0.2} & {0.1} &  {0.2}  \\
    
    $\beta_{hv}$  & $ 0.05 - 0.25\text{ day}^{-1}$& \cite{mandal2011mathematical}& {0.2} &  {0.1} &  {0.2} \\
    
 \hline
\end{tabular}}
\caption{Biologically feasible parameter ranges.}
\end{table}

\bigskip

\noindent{\bf Comments about Table \ref{Parameter_table}:}

\begin{enumerate}
\item\label{commentr} From interviews with the fish farmers, it was inferred that it takes two months for the vegetation to completely cover the {borders of} fish {ponds} and border cleaning {activities are performed every} 30  to 60 days. Moreover, at each cleaning episode more than half of the border vegetation is removed, which makes the cleaned proportion  $\gamma(t)$ greater than $0.5$.

\item\label{commentmup} Pond vegetation interferes with the feeding behavior of the fish, and indirectly protects the larvae from their predators \cite{Kamareddine2012}. Without vegetation, larvivorous fishes can reduce the amount of larvae in 90\%. 

\item\label{commentkappa} The range of values for $\kappa$ is calculated from \cite{Aguas2011}. Adherence to {malaria} treatment directly affects the recovery rate \cite{pereira2011adherence}. We choose a value of $\kappa$ for the simulation corresponding roughly to 14 days of treatment and a week until symptoms onset.

\end{enumerate}

\subsection{\sol{Numerical Illustration of the Main Results}}\label{scenario simulation section}
    We aim at illustrating Theorems \ref{Theo 3.2} and \ref{Theo 3.3} with numerical simulations. The following scenarios will be simulated.

\noindent\hspace{6pt}{\bf Scenario 1:} Positive periodic mosquito and  infected human populations.

\noindent\hspace{6pt}{\bf Scenario 2:} Positive periodic mosquito population, and infected population converging to $0$.

\noindent\hspace{6pt}{\bf Scenario 3:} Mosquito population and infected population converging to $0$.

The numerical values used for the simulations are given in Table  \ref{Parameter_table}. For $K_w(V)$ and $K_p(V)$, we choose the following arbitrary functions: 
\begin{equation}
\label{K_p_and_K_w}
\begin{split}
K_p (V) \coloneqq \frac{0.8}{1 + e^{-5(V - 0.5)}},\quad K_w (V) \coloneqq 4 V + 0.5 .
\end{split}    
\end{equation}
The equation for $K_w(V)$ was interpolated by taking $K_w(0.2) = 1.3$ thousand larvae and $K_w(0.8) = 3.7$ thousand larvae.  $K_p$ is chosen to be a sigmoid function with arbitrary parameters such that $K_p(0.5) = 0.4$ thousand larvae. \fel{While the numerical parameters were arbitrary, the choice of a linear function for $K_w$ and a sigmoidal function for $K_p$ was made by extrapolating the results presented in \cite[Fig. 5]{Reis2015malj}.} We chose two different kinds of functions so as not to simplify the conditions on Theorem \ref{Theo 3.2}. We consider that the vegetation has assumed periodic behaviour as described in Proposition \ref{ch2:sec3:Vper_sol}, with the asymptotic value $\gamma^* = 0.65$. The values for $\mu_M$ and $\mu_L$ in Scenario 3 were chosen outside of the biological feasible range in order to force the differential equation system to assume the desired asymptotic behaviour.
   System \eqref{sec2:sys3} was integrated numerically for each scenario fixing the initial conditions
   $$
   (V,S,I,M_{S},M_{I},L_{p},L_{w})(0)= (0.7,0.9,0.1,0.5,0,2,0.2).
   $$
   The values for the conditional expressions of both Theorem \ref{Theo 3.2} and \ref{Theo 3.3} are given in Table \ref{Table:Conditions}. The simulations are shown in Figure \ref{Sec5.2Fig1}. We could then validate numerically the results predicted by Theorems \ref{Theo 3.2} and \ref{Theo 3.3}.

\begin{table}[tb]
\resizebox{\linewidth}{!}{
\begin{tabular}{cccc}
 \hline
  {\bf Expression} & {\bf 1} & {\bf 2} & {\bf 3}\\
 \hline
 \hline
 \\
 $\mathcal{N}$ &  45.967 & 25.569 & 0.567 \\ \\ 
\sol{$ \mathcal{N} \max_{t\in [0,\tau)} {\frac{{\tilde K}_w}{ {\tilde K}_w + {\tilde K}_p}} + \mathcal{N}_p(\tau^-) \max_{t\in [0,\tau)}{\frac{{\tilde K}_p}{ {\tilde K}_w + {\tilde K}_p}}$}
 &  \fel{49.579}  & \fel{27.487} & \fel{0.609} \\ \\
\sol{
 $\frac{\min_{[0,\tau)}{\left( \frac{ {\tilde K}_w (t)}{ {\tilde K}_w(t) + {\tilde K}_p (t)} \right)}}{\mathcal{N}^{-1} + \frac{ {\tilde K}_w (\tau^-) + {\tilde K}_p(\tau^-) }{{\tilde K}_w(0) + {\tilde K}_p(0)}} + 
 \frac{\min_{[0,\tau)} \frac{ {\tilde K}_p (t)}{ {\tilde K}_w(t) + {\tilde K}_p (t)}}{\mathcal{N}_p(0)^{-1} + \frac{ {\tilde K}_w (\tau^-) + {\tilde K}_p(\tau^-) }{{\tilde K}_w(0) + {\tilde K}_p(0)}} $
}
& \fel{0.002} & \fel{0.003} &  \fel{0.004} \\ \\ 
 \hline \\ 
    $\displaystyle{\max_{t \in [0,\tau)} \mathcal{R}(t)}  $ 
    & {3.899}  & {1.018} & \fel{0.000}\\ \\
     $\displaystyle{\min_{t \in [0,\tau)} \mathcal{R}(t)}  $
     & {3.288}  & {0.835} & \fel{0.000}\\ \\
 \hline
\end{tabular}}
\caption{Calculated values for the conditional expression.}
\label{Table:Conditions}
\end{table}

\begin{figure}[ht] 
\resizebox{\linewidth}{!}{
{\centering\includegraphics[width=0.75\linewidth]{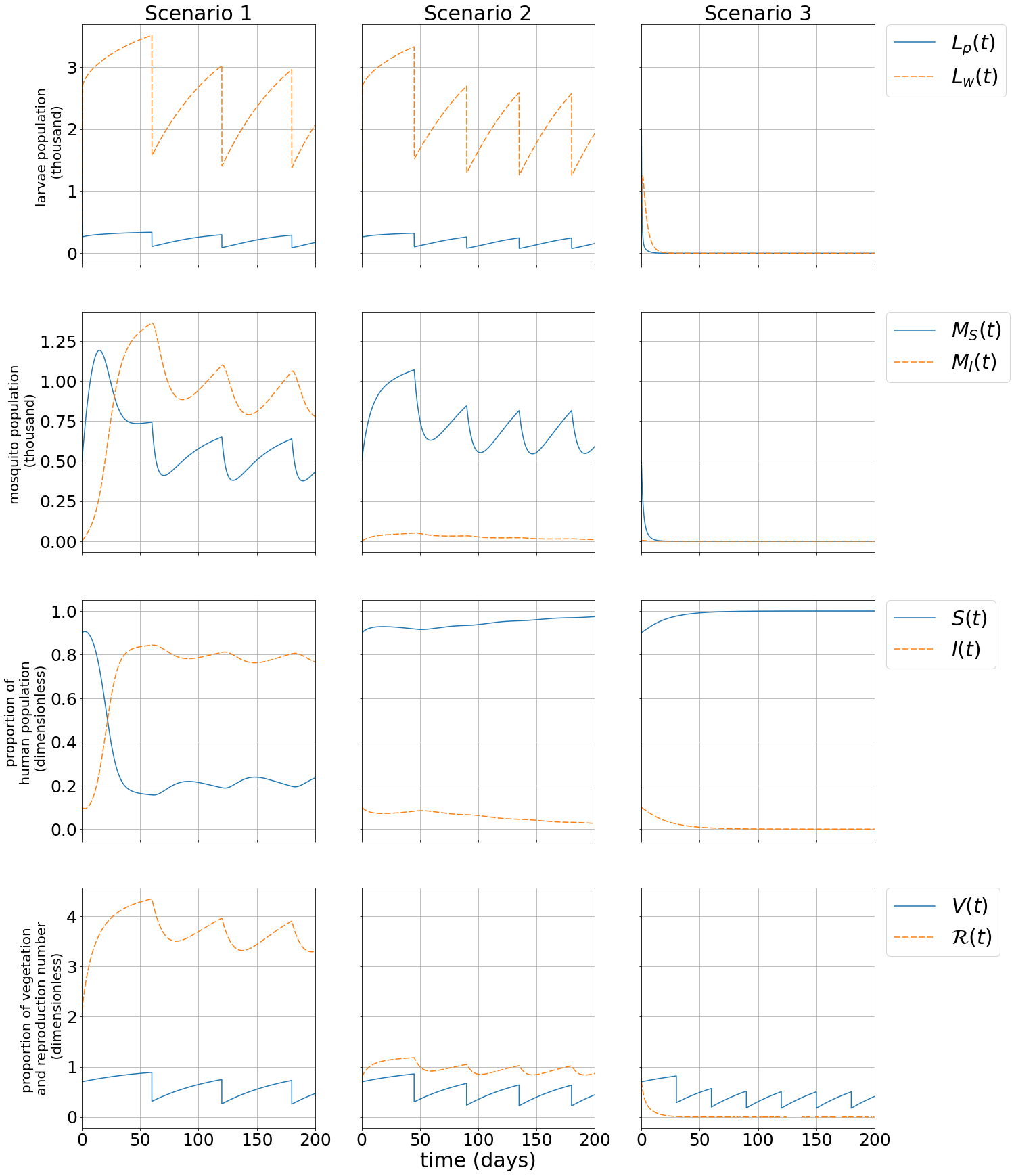}}\\
 } 
 \caption{Behaviour of the system under each scenario. Both mosquitoes and larvae are in the scale of thousands.
 }
 \label{Sec5.2Fig1}
\end{figure}    



\subsection{Analysis}\label{analysis ssec}

Using Table \ref{Parameter_table}, we can calculate the range of biologically feasible values for the basic offspring number $\mathcal{N}$, that gives $[0.953,55.126]$. It is possible to observe that elimination of the mosquito population through border vegetation removal is unfeasible if $\mathcal{N}$ is in the range of biologically feasible parameters. That is the reason why, in order to simulate a situation like the one of Scenario 3, we chose values for $\mu_M$ and $\mu_L$ that are outside the range observed in nature \cite{Barros2011,villarreal2015establishment}. 

Now, consider Theorem \ref{Theo 3.3}. In situation 1 of that theorem, the disease is eradicated. In situation 2, it becomes endemic, with varying incidence levels. Again, both conditions are {\em sufficient} conditions, and do not exhaust the possibilities. The case where $$
{\min_{t \in [0,\tau)} \mathcal{R}(t)} \leq 1 < {\max_{t \in [0,\tau)} \mathcal{R}(t)},$$
was observed through numerical simulation to assume periodic behavior close to $0$. 

From Theorem \ref{Theo 3.3}, the value $ {\left(\dfrac{\kappa \mu_M}{\beta_{hv} \beta_{vh}}\right)^2}$ is a threshold value for the mosquito population. If the maximum value of the mosquito population is below this threshold, the disease is eradicated. 

    
\begin{table}
\centering
\resizebox{0.3\linewidth}{!}{
\begin{tabular}{lr}
 \hline
  {\bf Parameter} &  {\bf Values}\\
 \hline
 \hline
    $r$   &  0.01666  \\

    $\tau$ &  {\em varying}\\

    $\gamma$ &  0.65\\
    
    $\alpha$ &  26.2\\
    
    $\nu$ &  0.0641\\
    
    $\mu_L$ &  0.62\\ 
    
    $\mu_p$ &  0.31\\
    
    $\mu_M$ & 0.16\\
    
    $\kappa$  &  0.05 \\
    
    $\beta_{vh}$  & 0.2 \\
    
    $\beta_{hv}$ &   0.2\\
 \hline
\end{tabular}}
\caption{Value of parameters for the simulation of the effect of cleaning frequency change. Biological meaning of parameters is given in Table \ref{model:table1}}
\label{Table 5}
\end{table}
The biological meaning of the parameters in the conditions of Theorem \ref{Theo 3.3} are the  transmission rates, the  recovery rate, the  mosquito mortality and the size of the mosquito population. These give hints to the more effective ways of fighting malaria: prevention of bites, treating infected individuals quickly, and controlling mosquito population. From the epidemiologist or the public health practitioner point of view, the most important issue is to have a grasp on the relative effects of each factor on malaria transmission \cite{mandal2011mathematical}. 

We simulate the effect of an increase in the frequency of cleaning, using the set of parameters described in Table \ref{Table 5}. The values for $\tau$ range from 70 to 5 days. In Figure \ref{Sec5.3Fig1}, we represent the maximum and minimum values of both $\mathcal{R}(t)$ (as defined in \eqref{mathcalR_time_dependent}) and the size of the infected human population (\sol{once} they assume periodic behaviour) as a function of $\tau$. \sol{In the graphic on the left hand-side, t}he green \sol{horizontal} line at 
1 shows the threshold value for $\mathcal{R}(t)$. 

The effect of an increased frequency of cleaning is a reduction of the average mosquito population, and consequently a reduction of the average infected population. Eventually, the solution's asymptotic behaviour changes from asymptotically positive periodic to convergence towards $0$. Even with moderate efforts, the average proportion of infected humans is significantly reduced.
\begin{figure}[ht] 
\centering
 \noindent{\centering\includegraphics[width=1\linewidth]{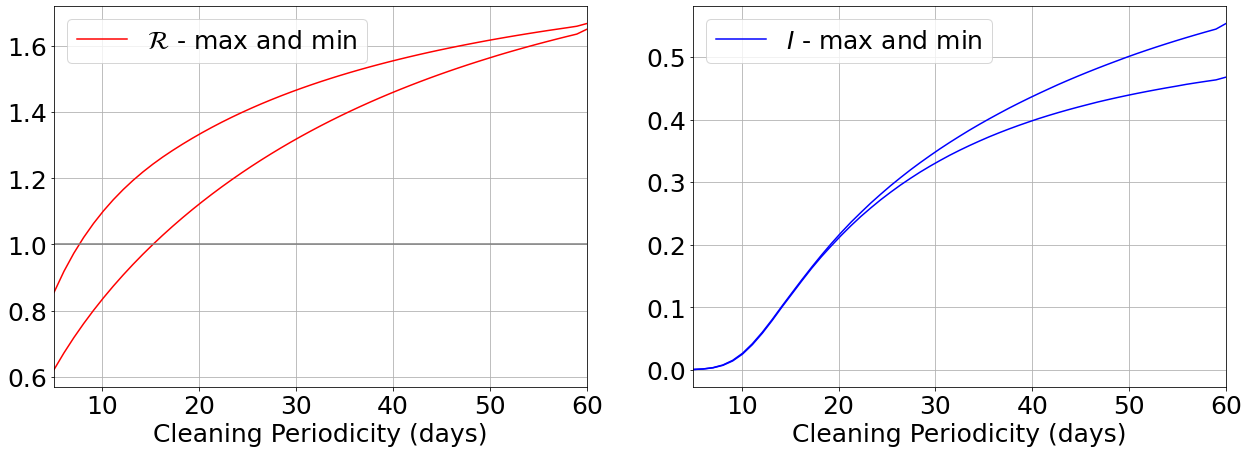}}\\
  \indent 
 \caption[Behaviour of the disease under \sol{different} cleaning frequencies.]{Behaviour of the disease under  \sol{different} cleaning frequencies.
On the left, the lines represent the minimum and maximum values of $\mathcal{R}(t)$ as a function of the cleaning periodicity.  
 On the right, the minimum and maximum values for the infected population $I.$
 It is possible to observe that as $\max \mathcal{R}(t)$ becomes smaller than $1$, the proportion of infected approaches $0$.}
 \label{Sec5.3Fig1}
\end{figure}  


\section{Summary and Conclusion}
 
    
     

{Malaria causes a large disease burden in the tropic regions of the world. As climate changes, its geographical range is expanding to higher altitudes and latitudes. In the absence of vaccines,  incidence reduction is highly dependent on vector control either by the removal of breeding sites or by the use of insecticides. In different parts of the world, fish farming in malaria endemic areas has been stimulated, without considering the risks it poses to the population, when not properly managed \cite{howard2008abandoning, maheu2010risk}. }  

 {In this paper, we used field observations and data from a fish farming area in Acre, Northwest of Brazil, to propose a system of differential equations to model the interaction between the rate of border vegetation cleaning of fish ponds, the life cycle of \textit{Anopheles darlingi} and the transmission of malaria. The model considers the differences observed between vector populations in commercially active and inactive ponds. As far as we know, this is the first model describing the links between malaria and fish farming. The model is used to study the impact of the periodic cleaning of ponds on malaria incidence. Through Theorems \ref{Theo 3.2} and \ref{Theo 3.3}, we established conditions for determining the asymptotic behaviour of the mosquito population and the incidence of disease. Numerical simulations show that an increase in cleaning frequency can reduce the mosquito population, which in turn reduces incidence of malaria. This result highlights the importance of continuous maintenance of ponds. This can be a challenge since cleaning is often costly and requires high effort. When fish farmers were asked about these practices, they considered them hard to achieve since malaria prevention is often not seen as a sufficient driver for action, compared with other problems; and continuous actions require equipment that are not available \cite{alves_codeco_peiter_souza-santos_2019}.}       

{The biology of malaria vectors is one of the best known. Still, to model this system, some assumptions were made without strong evidence from the literature. For example, the effect of vegetation on recruitment rate was modelled as a logistic function, where $K$ depends on the vegetation border. More studies on the effect of vegetation on density dependence would be helpful to better parameterize this relationship. Moreover, we disregarded spatial effects on the vector dynamics. The region has hundreds of ponds of different sizes. Here we pooled them together as a single pond with fish and a single pond without fishes. The model could potentially be improved by dropping these assumptions. }
	 
{The proposed model can have other applications, for example, combining cleaning practices with other control strategies, such as periodic use of insecticides, which has been tested in fish ponds \cite{fontoura2021monthly}.}

\section*{Acknowledgments}
This research was funded by the Applied Research Project of FGV - Fundação Getúlio Vargas entitled {\em ``Modelagem, análise e estimativa da contribuição dos tanques de piscicultura na população do mosquito anopheles e o impacto na transmissão da malária no Alto Juruá, Acre''}, by the Program {\em Jovem Cientista do Nosso Estado} of FAPERJ, Brazil, by the STIC AmSud funding for the MOSTICAW Project (Process No. 99999.007551/2015-00) and by CNPq Grant No. 454665/2014-8.
The authors particularly acknowledge the fish farmers from M\^ancio Lima, Acre, Brazil, for their availability.

{We acknowledge the anonymous reviewers for their careful reading and comments that helped us significantly improve this manuscript.}
\appendix

\section{{Remarks on the parameter $K$}}
\label{AppK}

{
The parameter $K$ can be interpreted in two ways: as the reciprocal of the derivative of the {\em per-capita} recruitment rate with respect to $L$, or as the maximal value for $L$ that can be attained when $M_S + M_I$ goes to $  \infty$. 
Consider the equation (2.1e):
\[
    \dot L = \alpha (M_S + M_I)\left(1 - \frac{L}{K}\right) - (\nu + \mu_L)L.
\]
The adult rate of larvae production can be written as $\rho(L) = \alpha\left(1 - \frac{L}{K}\right)$, which can be interpreted as a recruitment rate with larval density dependence.
This recruitment rate has a maximum value $\rho(0) = \alpha$ and decreases linearly with slope $K^{-1}$.
This provides the first interpretation for the parameter $K$.
On the other hand, if $L(0) < K$, note that $K$ is an upper bound for larval abundance since $\dot L \leq 0$ for $L \geq K$.
However, it can be shown that lower upper bounds exist.
Assuming that $L(0) < K$, we can rewrite equation (2.1e) as:
\[
    \dot{L} 
     = \alpha(M_S+M_I)\left(\frac{1}{K}+\frac{\nu+\mu_L}{\alpha(M_S+M_I)}\right) \left(\frac{K}{1+\frac{K(\nu+\mu_L)}{\alpha(M_S+M_I)}}-L \right),
\]
from which we can deduce that
\[
L \leq  \frac{K}{1+\frac{K(\nu+\mu_L)}{\alpha(M_S+M_I)}} < K.
\]
Note that this tighter lower bound approaches $K$ from below as $M_S + M_I$ tends to infinity. 
}
 
\section{On Carath\'edory solutions}\label{Ch3:sec2}

Here we recall existence and uniqueness results for Cauchy problems with right-hand side functions that are merely measurable with respect to time. In Section \ref{poincare_subsec} we employ these results to prove that the {\em Poincar\'e map} associated to this type of system is well-defined, continuous and differentiable.

\vspace{5pt}

\noindent{\bf Basic Assumptions.} For an open set $\Omega \subset \mathbb{R} \times \mathbb{R}^n$  and a function $F \colon \Omega \mapsto \mathbb{R}^n$, let us consider the conditions \eqref{sec2:Ass1} and \eqref{sec2:Ass2} below.
\begin{align}\tag{A}\label{sec2:Ass1}
\begin{cases}
    t \mapsto F(t,x)\ \text{is measurable on } \Omega_x = \{t : (t,x) \in \Omega\}, \text{ for all } x, \\
    x \mapsto F(t,x)\ \text{is continuous on }\ \Omega_t = \{x : (t,x) \in \Omega\}, \text{ for a.e. } t.
\end{cases}
\end{align}
\begin{align}\tag{B}\label{sec2:Ass2}
\begin{cases}
\text{For any compact set $K \subset \Omega$, there exist $C_K>0, L_K>0$ such that}\\
|F(t,x)| \leq C_K,\,\, |F(t,x) - F(t,y)| \leq L_K|x - y|,\quad \text{for all}\ (t,x), (t,y) \in\ K.
\end{cases}
\end{align}

\vspace{5pt}

We begin by recalling the following result \cite{rudin2006real,bressan2007introduction,coddington1955theory}.

\begin{theorem}[Carath\'eodory's Existence Theorem]\label{sec3:theo1}
    Given $F \colon \Omega \to \mathbb{R}^n$ satisfying \eqref{sec2:Ass1} and \eqref{sec2:Ass2}, and some $(t_0,x_0) \in \Omega$, consider the Cauchy problem
\begin{equation}\label{sec3:theo1:eq1}
    \dot x = F(t,x),\quad x(t_0) = x_0.
\end{equation}
     The following assertions hold.
\begin{itemize}
    \item[(i)] There exists $\epsilon > 0$ such that \eqref{sec3:theo1:eq1} has a local solution defined on $ [t_0, t_0 + \epsilon].$
    \item[(ii)] Moreover, if $\Omega = \mathbb{R}\times \mathbb{R}^n$ and there exist constants $C,L$ such that
\begin{equation}\label{sec3:theo1:con1}
    |F(t,x)| \leq C,\ |F(t,x) - F(t,y)| \leq L|x - y|,\ \ \text{for all}\ \ (t,x),(t,y) \in \mathbb{R}\times\mathbb{R}^n,
\end{equation}
    then, for every $t_1>t_0$, the initial value problem \eqref{sec3:theo1:eq1} has a unique global solution defined on $ [t_0,t_1]$. Moreover, the solution depends continuously on the initial data $x_0$.
\end{itemize}
\end{theorem}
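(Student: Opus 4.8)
The plan is to handle (i) and (ii) separately, reducing both to the equivalent integral formulation: an absolutely continuous $x$ solves \eqref{sec3:theo1:eq1} on an interval if and only if $x(t) = x_0 + \int_{t_0}^t g(s,x(s))\,ds$ there. For the \emph{local existence} claim (i), I would first fix a compact box $R = [t_0,t_0+a]\times \overline{B}(x_0,b) \subset \Omega$ and invoke \eqref{sec2:Ass2} to get a bound $|g|\le C_R =: M$ on $R$; then I set $\epsilon := \min\{a,\,b/M\}$, chosen precisely so that approximate trajectories cannot leave $R$ on $[t_0,t_0+\epsilon]$.

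Next I would build Tonelli-type delayed approximations: for each $n$, set $x_n(t) := x_0$ on $[t_0,t_0+\tfrac1n]$ and $x_n(t) := x_0 + \int_{t_0}^{t-1/n} g(s,x_n(s))\,ds$ for $t_0+\tfrac1n < t \le t_0+\epsilon$. The delay makes each $x_n$ well-defined by stepping forward in blocks of length $\tfrac1n$, and the bound $|g|\le M$ shows that $\{x_n\}$ stays in $R$ and is uniformly bounded and uniformly Lipschitz, hence equicontinuous. By the Arzel\`a--Ascoli theorem I would extract a subsequence $x_{n_k}$ converging uniformly on $[t_0,t_0+\epsilon]$ to some continuous limit $x$.

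The \textbf{main obstacle} is passing to the limit inside the integral when $g$ is only measurable in $t$. Here the Carath\'eodory conditions \eqref{sec2:Ass1} are used crucially: for a.e.\ fixed $s$, continuity of $x\mapsto g(s,x)$ together with $x_{n_k}(s)\to x(s)$ gives $g(s,x_{n_k}(s))\to g(s,x(s))$ pointwise a.e., while the uniform bound $|g(s,x_{n_k}(s))|\le M$ supplies a dominating function. The dominated convergence theorem then yields $\int_{t_0}^{t-1/n_k} g(s,x_{n_k}(s))\,ds \to \int_{t_0}^{t} g(s,x(s))\,ds$, the vanishing delay contributing an error of size at most $M/n_k \to 0$. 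This identifies $x$ as a solution of the integral equation; since its integrand is bounded, $x$ is absolutely continuous and solves \eqref{sec3:theo1:eq1} a.e., establishing (i).

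For (ii), the global hypotheses \eqref{sec3:theo1:con1} reduce the remaining work to routine Gr\"onwall arguments. The a priori estimate $|x(t)|\le |x_0| + C\,(t-t_0)$ rules out finite-time blow-up, so the local solution extends to all of $[t_0,t_1]$. \emph{Uniqueness} follows by subtracting the integral equations for two solutions and applying Gr\"onwall's inequality to $|x(t)-y(t)| \le L\int_{t_0}^t |x(s)-y(s)|\,ds$, forcing $x\equiv y$. \emph{Continuous dependence} is the same computation with distinct data, giving $|{\bf x}(t;t_0,x_0)-{\bf x}(t;t_0,y_0)| \le |x_0-y_0|\,e^{L(t-t_0)}$, which tends to $0$ uniformly on $[t_0,t_1]$ as $y_0\to x_0$.
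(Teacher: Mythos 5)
Your proof is correct, but there is nothing in the paper to compare it with step by step: the paper does not prove this theorem at all. It is recalled as a known classical result, with citations to standard texts (Rudin, Bressan--Piccoli, Coddington--Levinson), and the argument implicit in those references is the Picard fixed-point one: since hypothesis \eqref{sec2:Ass2} already contains a local Lipschitz condition, the integral operator $w\mapsto x_0+\int_{t_0}^{\cdot}g(s,w(s))\,ds$ is a contraction on $C([t_0,t_1];\mathbb{R}^n)$ for a suitable exponentially weighted norm, which yields existence, uniqueness, and continuous dependence in one stroke. Your route is genuinely different: for (i) you use Tonelli delayed approximations, Arzel\`a--Ascoli, and dominated convergence, which is the proof of the general Carath\'eodory theorem and uses only the bound $|g|\le C_K$, not the Lipschitz half of \eqref{sec2:Ass2}. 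What this buys is existence under strictly weaker hypotheses; the price is a compactness argument plus having to re-derive uniqueness and continuous dependence separately via Gr\"onwall in (ii), which you do correctly (and your observation that under \eqref{sec3:theo1:con1} the a priori bound plus continuation gives a solution on all of $[t_0,t_1]$ is sound). The one step you should make explicit --- it is the other place where \eqref{sec2:Ass1} is indispensable, and it is usually isolated as a lemma in the cited texts --- is that $s\mapsto g(s,x_n(s))$ is measurable, so that the integrals defining the Tonelli scheme, and later the limit integral equation, make sense at all: this follows by approximating the continuous function $x_n$ uniformly by piecewise constant functions $\phi_j$, noting that each $s\mapsto g(s,\phi_j(s))$ is measurable by the first half of \eqref{sec2:Ass1}, and letting $j\to\infty$ using the continuity of $g(s,\cdot)$ for a.e.\ $s$. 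With that remark added, your argument is complete.
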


\begin{theorem}[Uniqueness of solutions]\label{sec3:theo2.5}
Under the hypotheses of Theorem \ref{sec3:theo1}, let $x_1$ and $x_2$ be solutions of \eqref{sec3:theo1:eq1} defined on the intervals $[t_0,t_1]$, $[t_0,t_2],$ respectively. If $t' \coloneqq \min\{t_1,t_2\}$, then $x_1(t) = x_2(t),$ for $t \in [t_0,t']$.
\end{theorem}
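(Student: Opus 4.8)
The plan is to reduce the uniqueness claim to a Gronwall-type estimate for the difference of the two solutions. Since $x_1$ and $x_2$ are absolutely continuous and satisfy $\dot x = g(t,x)$ almost everywhere on $[t_0,t']$ with the common initial value $x_0$, I would first rewrite them in integral form,
$$x_i(t)=x_0+\int_{t_0}^t g\bigl(s,x_i(s)\bigr)\,ds,\qquad t\in[t_0,t'],\ i=1,2,$$
which is legitimate because the continuity of $x_i$ together with \eqref{sec2:Ass1} makes $s\mapsto g(s,x_i(s))$ measurable, and \eqref{sec2:Ass2} makes it bounded, hence integrable. Subtracting the two identities gives $x_1(t)-x_2(t)=\int_{t_0}^t\bigl(g(s,x_1(s))-g(s,x_2(s))\bigr)\,ds$.

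Next I would produce a single Lipschitz constant valid along both trajectories. The graphs $\Gamma_i:=\{(s,x_i(s)):s\in[t_0,t']\}$ are compact (continuous image of a compact interval) and contained in $\Omega$, so $K:=\Gamma_1\cup\Gamma_2$ is a compact subset of $\Omega$. Applying the Lipschitz part of \eqref{sec2:Ass2} to this $K$ yields $L_K>0$ with $|g(s,x_1(s))-g(s,x_2(s))|\le L_K|x_1(s)-x_2(s)|$ for every $s\in[t_0,t']$, since for each fixed $s$ both points $(s,x_1(s))$ and $(s,x_2(s))$ lie in $K$ (they share the same first coordinate). Setting $z(t):=|x_1(t)-x_2(t)|$ and taking norms in the integral identity then gives the homogeneous integral inequality $z(t)\le L_K\int_{t_0}^t z(s)\,ds$ on $[t_0,t']$, with $z(t_0)=0$.

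Finally, Gronwall's inequality in its homogeneous form ($\gamma=0$, $\alpha\equiv L_K$, $\beta\equiv0$) forces $z\equiv0$, i.e.\ $x_1\equiv x_2$ on $[t_0,t']$; alternatively one may iterate the inequality to obtain $z(t)\le z_{\max}\,\frac{(L_K(t-t_0))^k}{k!}$ for all $k$, where $z_{\max}:=\max_{[t_0,t']}z$ is finite by continuity, and let $k\to\infty$. The only genuinely delicate point is the construction of a \emph{single} compact set $K$ on which \eqref{sec2:Ass2} supplies one Lipschitz constant controlling the difference simultaneously for every $s$; everything else is the routine Picard--Lindel\"of argument adapted to the Carath\'eodory (merely measurable in $t$) setting, where care is only needed to keep all manipulations at the level of integral equations and almost-everywhere statements rather than pointwise derivatives.
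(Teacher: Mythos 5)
Your proposal is correct and follows essentially the same route as the paper's own argument: both build the single compact set $K$ as the union of the two solution graphs, extract one Lipschitz constant $L_K$ from condition \eqref{sec2:Ass2}, and conclude $z\coloneqq|x_1-x_2|\equiv 0$ via Gronwall with zero initial data. The only cosmetic difference is that you run Gronwall on the integral inequality (or by iteration), whereas the paper applies its differential form to $\dot z\le L_K z$ a.e.; both are equally valid here.
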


\begin{theorem}[Maximal solutions]\label{sec3:theo3}
Under the hypotheses of Theorem \ref{sec3:theo1}, let $t^* > t_0$ be the supremum of all times $t_1$ for which \eqref{sec3:theo1:eq1} has a solution $x$ defined on $[t_0,t_1]$. Then, either $t^* = \infty$ or
\[    
\lim_{t \to t^*_-} \left(|x(t)| + \frac{1}{d\big((t,x(t)),\partial \Omega\big)}\right) = \infty.
\]
\end{theorem}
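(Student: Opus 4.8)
The plan is to argue by contradiction: assume $t^* < \infty$ and that the stated limit \emph{fails} to be $+\infty$, and then construct an extension of the solution past $t^*$, contradicting the maximality of $t^*$. Before that, I would first record that the maximal solution is well defined and unambiguous: by Carath\'eodory's existence theorem (Theorem \ref{sec3:theo1}(i)) a local solution exists through any point of $\Omega$, and by uniqueness (Theorem \ref{sec3:theo2.5}) any two solutions agree on the intersection of their domains, so gluing local pieces produces a single solution $x$ defined on $[t_0, t^*)$.

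Now suppose, for contradiction, that $t^* < \infty$ while the limit is not $+\infty$. Negating the definition of the limit yields a constant $M > 0$ and a sequence $t_n \uparrow t^*$ with $|x(t_n)| \le M$ and $d\big((t_n, x(t_n)), \partial \Omega\big) \ge 1/M$ for all $n$; in particular the points $(t_n, x(t_n))$ lie in the compact set $K \coloneqq \{(t,y) : t_0 \le t \le t^*,\ |y| \le M,\ d((t,y), \partial\Omega) \ge 1/M\} \subset \Omega$. The crucial step, and the one I expect to be the main obstacle, is upgrading this \emph{return to a compact set along a sequence} into the statement that the \emph{whole} trajectory near $t^*$ is trapped in a compact set. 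To this end I would enlarge $K$ to a compact buffer $\tilde K \coloneqq \{(t,y) : t_0 \le t \le t^*,\ |y| \le M+1,\ d((t,y), \partial\Omega) \ge 1/(2M)\} \subset \Omega$ and let $C \coloneqq C_{\tilde K}$ be the bound on $|g|$ furnished by \eqref{sec2:Ass2}. Fixing $n$ large enough that $\sqrt{1+C^2}\,(t^* - t_n) < \min\{1, 1/(2M)\}$, I observe that as long as the trajectory stays in $\tilde K$ one has $|\dot x| = |g(t,x(t))| \le C$ a.e., hence $|x(t) - x(t_n)| \le C(t - t_n)$, and, since $d(\cdot, \partial\Omega)$ is $1$-Lipschitz, $d((t,x(t)), \partial\Omega) \ge 1/M - \sqrt{1+C^2}\,(t - t_n) > 1/(2M)$. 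A standard continuity/connectedness argument applied to $\bar t \coloneqq \sup\{t \in [t_n, t^*) : (s,x(s)) \in \tilde K \text{ for all } s \in [t_n,t]\}$ shows these estimates keep $(t,x(t))$ strictly interior to $\tilde K$, forcing $\bar t = t^*$; thus $(t,x(t)) \in \tilde K$ for every $t \in [t_n, t^*)$.

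With the trajectory confined to $\tilde K$, the uniform bound $|\dot x| \le C$ makes $x$ Lipschitz on $[t_n, t^*)$, hence uniformly continuous, so the limit $x^* \coloneqq \lim_{t \to t^*_-} x(t)$ exists and $(t^*, x^*) \in \tilde K \subset \Omega$ by closedness of $\tilde K$. Passing to the limit in the integral identity $x(t) = x(t_n) + \int_{t_n}^t g(s, x(s))\, ds$ shows that the function extended by $x(t^*) \coloneqq x^*$ is an absolutely continuous solution of \eqref{sec3:theo1:eq1} on $[t_0, t^*]$. Finally, applying Theorem \ref{sec3:theo1}(i) at the interior point $(t^*, x^*) \in \Omega$ produces a local solution on some $[t^*, t^* + \epsilon]$, which by Theorem \ref{sec3:theo2.5} glues to $x$ and extends it strictly beyond $t^*$. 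This contradicts the definition of $t^*$ as the supremum of existence times, and therefore $t^* < \infty$ forces the limit to equal $+\infty$, which is the desired dichotomy.
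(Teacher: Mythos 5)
Your proof is correct, and it is the standard continuation argument. Note that the paper itself does not prove this statement: Theorem \ref{sec3:theo3} is recalled from the literature on Carath\'eodory equations (the references cited before Theorem \ref{sec3:theo1}), so there is no in-paper proof to compare against; your argument is essentially the one found in those sources, with the right key step made explicit --- upgrading the return of $(t_n,x(t_n))$ to a compact set along a sequence into confinement of the whole tail of the trajectory, via the uniform bound $|\dot x|\le C_{\tilde K}$ and the $1$-Lipschitz property of $d(\cdot,\partial\Omega)$.

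One small technical point: as literally written, the buffer set $\tilde K \coloneqq \{(t,y): t_0\le t\le t^*,\ |y|\le M+1,\ d\big((t,y),\partial\Omega\big)\ge 1/(2M)\}$ need not be contained in $\Omega$, since points lying outside $\overline{\Omega}$ can also be far from $\partial\Omega$; and condition \eqref{sec2:Ass2} only furnishes the constant $C_{\tilde K}$ for compact subsets of $\Omega$. The fix is immediate: replace $\tilde K$ by $\tilde K \cap \overline{\Omega}$, which is still compact, and lies in $\Omega$ because any of its points satisfies $d\big((t,y),\partial\Omega\big)\ge 1/(2M)>0$ while belonging to $\overline{\Omega}=\Omega\cup\partial\Omega$, hence to $\Omega$. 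Your connectedness argument then keeps the trajectory in this corrected set, since leaving $\Omega$ would require touching $\partial\Omega$, where the distance vanishes. With this adjustment the proof is complete; in particular the final extension step is sound (and, as you note, uniqueness is only needed to speak of \emph{the} maximal solution --- the contradiction with the supremum already follows from the mere existence of an extension past $t^*$).
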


\section{Cooperative systems with time-measurable dynamics and concave nonlinearities}\label{appendix_cooperative}

    In this section we show properties of the {\em Poincar\'e map} associated to system \eqref{sec3:theo1:eq1}. These properties are used to extend  \sol{Jiang}'s result \cite[Theorem 5.5]{jifa1993cooperative} to time-measurable differential equations.


\subsection{Well-Definedness and Differentiability of Poincar\'e Map}\label{poincare_subsec}

Let us introduce the following notation: for given $(t_0,x_0)$ in $ \Omega,$ we write ${\bf x}(t;t_0,x_0)$ to denote the solution of \eqref{sec3:theo1:eq1} at time $t,$ whenever it exists.
For fixed $t > t_0,$ we define the {\em Poincar\'e map} $T$ corresponding to system \eqref{sec3:theo1:eq1} as the function that to each $x_0$ with  $(t_0,x_0) \in \Omega,$ associates ${\bf x}(t;t_0,x_0),$ {that is,} $T(x_0) \coloneqq {\bf x}(t;t_0,x_0).$

    Naturally, for the Poincaré map to be well-defined, we require that  for any given $x_0,$ the underlying Cauchy problem has a unique solution. In the case the system admits a forward invariant \sol{set,} we can define the Poincar\'e map on this set for any $t>t_0$, as stated in the following corollary.

\begin{corollary}[Well-definiteness of the Poincar\'e map]\label{sec3:Cor1}
    Assume that the hypotheses of Theorem \ref{sec3:theo1} hold and  that the differential equation $\dot x = F(t,x)$ admits a compact forward invariant set $\mathcal{C},$ with $[t_0,\infty)\times \mathcal{C} \subseteq \Omega.$  Then, the associated Poncar\'e map is well-defined on $\mathcal{C}$. 
\end{corollary}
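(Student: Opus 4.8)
The plan is to combine the local existence result of Theorem~\ref{sec3:theo1}, the uniqueness result of Theorem~\ref{sec3:theo2.5}, and the maximal-solution blow-up criterion of Theorem~\ref{sec3:theo3}, using the compact forward invariant set $\mathcal{C}$ to rule out blow-up. Concretely, fix $x_0 \in \mathcal{C}$ and consider the Cauchy problem \eqref{sec3:theo1:eq1} with $x(t_0)=x_0$. Since $[t_0,\infty)\times\mathcal{C}\subseteq\Omega$ and the hypotheses of Theorem~\ref{sec3:theo1} hold, there is a maximal solution $t\mapsto {\bf x}(t;t_0,x_0)$ defined on some interval $[t_0,t^*)$ with $t^*>t_0$. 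Theorem~\ref{sec3:theo2.5} guarantees that this solution is unique on its interval of definition, so that the value ${\bf x}(t;t_0,x_0)$ is unambiguously determined for each $t\in[t_0,t^*)$.

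The core of the argument is to show that $t^*=\infty$, which is where forward invariance of $\mathcal{C}$ does the work. By hypothesis, $x_0\in\mathcal{C}$ implies ${\bf x}(t;t_0,x_0)\in\mathcal{C}$ for every $t\in[t_0,t^*)$ on which the solution is defined. Since $\mathcal{C}$ is compact it is bounded, so $|{\bf x}(t;t_0,x_0)|$ stays bounded as $t\to t^{*-}$. Moreover, because $\mathcal{C}$ is a compact subset of $\Omega$ (indeed $[t_0,\infty)\times\mathcal{C}\subseteq\Omega$, and $\Omega$ is open), the graph of the trajectory stays within a compact subset of $\Omega$ bounded away from $\partial\Omega$, so $d\big((t,{\bf x}(t;t_0,x_0)),\partial\Omega\big)$ is bounded below by a positive constant, and hence $1/d(\cdot,\partial\Omega)$ stays bounded as $t\to t^{*-}$. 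Consequently the limit in the blow-up alternative of Theorem~\ref{sec3:theo3} cannot be $+\infty$, which forces $t^*=\infty$. Thus the solution is defined on all of $[t_0,\infty)$.

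Finally, I would assemble these facts to conclude well-definedness of the Poincar\'e map. For any fixed $t>t_0$ and any $x_0\in\mathcal{C}$, the value $T(x_0):={\bf x}(t;t_0,x_0)$ is defined (existence on $[t_0,\infty)$) and unambiguous (uniqueness), so $T\colon\mathcal{C}\to\mathbb{R}^n$ is a genuine function; invariance of $\mathcal{C}$ moreover shows $T(\mathcal{C})\subseteq\mathcal{C}$, so $T$ maps $\mathcal{C}$ into itself. The main obstacle is the verification that the trajectory remains in a compact subset of $\Omega$ bounded away from $\partial\Omega$: one must use both the compactness of $\mathcal{C}$ (for the boundedness of $|x|$) and the separation $[t_0,\infty)\times\mathcal{C}\subseteq\Omega$ together with openness of $\Omega$ (for the lower bound on the distance to $\partial\Omega$), in order to negate the blow-up alternative. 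Once that is secured, the corollary follows immediately from the three cited theorems.
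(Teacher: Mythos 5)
Your proof is correct and takes essentially the same route as the paper: the paper's own proof is the one-line remark that the corollary ``follows straightforwardly from Theorems \ref{sec3:theo1}--\ref{sec3:theo3}'', and your argument is precisely the expansion of that remark---existence and uniqueness from Theorems \ref{sec3:theo1} and \ref{sec3:theo2.5}, then compactness of the graph set $[t_0,t^*]\times\mathcal{C}\subseteq\Omega$ for a putative finite $t^*$ to bound $|x(t)|$ and $1/d\big((t,x(t)),\partial\Omega\big)$ and so defeat the blow-up alternative of Theorem \ref{sec3:theo3}.
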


\begin{proof} 
It follows straightforwardly from Theorems \ref{sec3:theo1}-\ref{sec3:theo3}.
\end{proof}

    The following Theorem proves differentiability of the Poincaré map w.r.t. the initial conditions.

\begin{theorem}[Differentiability of the Poincar\'e Map]\label{sec3:theo2}
    Suppose that the hypotheses of Corollary \ref{sec3:Cor1} hold, and assume
    further that $g$ is continuously differentiable with respect to $x$. 
    Let $t_0\in \mathbb{R},$  $t>t_0$ and consider the Poincar\'e map $T$ associated to the Cauchy problem \eqref{sec3:theo1:eq1} and time $t$. Then,  the map $x_0 \mapsto T  x_0 \coloneqq {\bf x}(t;t_0,x_0)$ is continuously differentiable with respect to $x_0$. Its Jacobian matrix is
\begin{equation}
    D_{x_0}T  = M(t,t_0),
\end{equation}
    where $M(\cdot,\cdot)$ is the fundamental matrix associated to the linear problem
\begin{equation}
    \dot v (t) = D_x F(t,{\bf x}(t;t_0,x_0))\cdot v (t).
\end{equation}
\end{theorem}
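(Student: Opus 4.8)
The plan is to run the classical argument for the $C^1$-dependence of solutions on initial data, adapting each step to the Carath\'eodory setting where $g$ is only measurable in $t$. Fix $t_0$, $t>t_0$ and $x_0$, write ${\bf x}(s):={\bf x}(s;t_0,x_0)$ for the reference trajectory (which by Corollary \ref{sec3:Cor1} exists, is unique, and remains in the compact forward-invariant set $\mathcal{C}$), and set $A(s):=D_x g(s,{\bf x}(s))$. Since ${\bf x}$ is absolutely continuous with values in $\mathcal{C}$ while $D_x g(\cdot,x)$ is measurable in $t$ and continuous in $x$, the map $s\mapsto A(s)$ is measurable; moreover, because $g$ is Lipschitz in $x$ with constant $L$ on the compact set $[t_0,t]\times\mathcal{C}$ by \eqref{sec2:Ass2} and is $C^1$ in $x$, one has $\|A(s)\|\le L$ for a.e. $s$. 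Hence $\dot v=A(s)v$ is a \emph{linear} Carath\'eodory system with essentially bounded coefficient: local existence (Theorem \ref{sec3:theo1}), the a priori Gronwall bound $|v(s)|\le|v(t_0)|e^{L(s-t_0)}$ preventing finite-time blow-up (Theorem \ref{sec3:theo3}), and uniqueness (Theorem \ref{sec3:theo2.5}) together produce a well-defined, absolutely continuous fundamental matrix $M(\cdot,t_0)$ with $M(t_0,t_0)=I$.

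First I would record Lipschitz dependence on the initial datum. Writing $\phi_h(s):={\bf x}(s;t_0,x_0+h)$ (which is defined on all of $[t_0,t]$ for $|h|$ small, since $\Omega$ is open and $\phi_h$ stays in a compact neighbourhood of ${\bf x}$ over the finite horizon), I would subtract the integral forms of the two Cauchy problems and invoke Gronwall's inequality together with \eqref{sec2:Ass2} to obtain $|\phi_h(s)-{\bf x}(s)|\le C|h|$ uniformly on $[t_0,t]$; in particular $\phi_h\to{\bf x}$ uniformly as $h\to0$.

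The core step is the error estimate. Define $e_h(s):=\phi_h(s)-{\bf x}(s)-M(s,t_0)h$, and use the fundamental theorem of calculus to write $g(s,\phi_h(s))-g(s,{\bf x}(s))=B_h(s)\,(\phi_h(s)-{\bf x}(s))$, where $B_h(s):=\int_0^1 D_x g\big(s,{\bf x}(s)+\theta(\phi_h(s)-{\bf x}(s))\big)\,d\theta$. Subtracting the integral equations for $\phi_h$, ${\bf x}$ and $M(\cdot,t_0)h$, and replacing $\phi_h-{\bf x}$ by $e_h+M(\cdot,t_0)h$, yields
\[
e_h(s)=\int_{t_0}^s B_h(r)\,e_h(r)\,dr+\int_{t_0}^s\big(B_h(r)-A(r)\big)M(r,t_0)h\,dr .
\]
Since $\|B_h\|\le L$ a.e., Gronwall's inequality gives $\sup_{[t_0,t]}|e_h|\le e^{L(t-t_0)}\,\|R_h\|_\infty$, where $R_h(s)$ denotes the second integral above. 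It therefore remains to prove $\|R_h\|_\infty=o(|h|)$, and since $\|M(\cdot,t_0)\|$ is bounded on $[t_0,t]$ this reduces to showing $\int_{t_0}^t\|B_h(r)-A(r)\|\,dr\to0$.

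This last point is where the measurable-in-time structure requires care, and I expect it to be the main obstacle. For a.e. fixed $r$ the map $x\mapsto D_x g(r,x)$ is continuous and, by the Lipschitz bound above, ${\bf x}(r)+\theta(\phi_h(r)-{\bf x}(r))\to{\bf x}(r)$, so $B_h(r)-A(r)\to0$ pointwise a.e.; the continuity in $x$ holds only for a.e.\ $t$, but that is exactly what is needed for a.e.\ pointwise convergence. As both $B_h(r)$ and $A(r)$ are bounded by $L$, the integrand is dominated by the constant $2L\in L^1([t_0,t])$, so the Dominated Convergence Theorem gives $\int_{t_0}^t\|B_h(r)-A(r)\|\,dr\to0$. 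Combining the estimates yields $|e_h(t)|=o(|h|)$, which is precisely differentiability of $T$ at $x_0$ with $D_{x_0}T=M(t,t_0)$. Finally, to upgrade to \emph{continuous} differentiability I would show $x_0\mapsto M(t,t_0)$ is continuous: the coefficient $A(s)$ depends on $x_0$ through ${\bf x}(s;t_0,x_0)$, which varies Lipschitz-continuously in $x_0$, and a second application of Gronwall and Dominated Convergence, entirely parallel to the error estimate, shows that the fundamental matrix depends continuously on $x_0$.
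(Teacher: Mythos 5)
Your proof is correct, but it is not the argument the paper uses: the paper's entire proof is a citation to \cite[Theorem 2.10]{bressan2007introduction}, and the technique behind that result (which an earlier draft of this paper sketched) is a contraction-mapping estimate rather than a Gronwall loop --- one views the perturbed trajectory ${\bf x}(\cdot\,;t_0,x_0+\epsilon v_0)$ as the fixed point of the integral operator $\psi(w)(\sigma)=x_0+\epsilon v_0+\int_{t_0}^{\sigma}g(s,w(s))\,ds$, which is a contraction in an exponentially weighted sup-norm, and then the contraction property gives $\|{\bf x}_\epsilon-{\bf y}_\epsilon\|\leq 2\,\|\psi({\bf y}_\epsilon)-{\bf y}_\epsilon\|$ for the candidate expansion ${\bf y}_\epsilon={\bf x}+\epsilon M(\cdot,t_0)v_0$, so that everything reduces to showing the residual of ${\bf y}_\epsilon$ is $o(\epsilon)$. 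Your route is the classical variational-equation argument: averaged Jacobian $B_h$, the integral identity for the error $e_h$, Gronwall, and dominated convergence, plus a second Gronwall/DCT pass for continuity of $x_0\mapsto M(t,t_0)$. The two approaches are close in spirit (both ultimately estimate the same residual, and your DCT step is exactly the estimate the contraction argument also needs), but yours buys self-containedness and makes explicit precisely where the Carath\'eodory structure enters --- measurability of $s\mapsto D_xg(s,{\bf x}(s))$, the a.e.\ $L^\infty$ bound coming from \eqref{sec2:Ass2}, and the fact that continuity of $D_xg(t,\cdot)$ for a.e.\ $t$ suffices for the pointwise convergence feeding the DCT --- which is the genuine content of extending the textbook statement to measurable-in-time dynamics; the paper's citation buys brevity and outsources this bookkeeping, with the weighted-norm contraction trick replacing your Gronwall inequality by a one-line estimate. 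One point to keep in mind in your write-up: since the theorem differentiates at points of the compact invariant set $\mathcal{C}$, the perturbed initial datum $x_0+h$ need not lie in $\mathcal{C}$, so the continuation argument you sketch (solutions starting near $x_0$ remain, over the finite horizon $[t_0,t]$, in a compact neighbourhood of the reference trajectory contained in $\Omega$) is not optional but genuinely needed; you do state it, and it should be carried out as a short bootstrap with the Gronwall bound.
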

\begin{proof}
It follows from \cite[Theorem 2.10]{bressan2007introduction}.
\end{proof}

\subsection{Asymptotic behaviour}

In view of the properties of the Poincar\'e map stated  above, we can extend  \cite[Theorem 5.5]{jifa1993cooperative} to the differential equation
\begin{equation}
\label{ch3:sec1:eq1}
\dot x = \mathcal{F}(t,x), \quad (t,x) \in \mathbb{R}_+ \times [0,p] ,
\end{equation}
where $p$ belongs to $\mathbb{R}^n_+$ and $\mathcal{F}$ is \sol{continuous with respect to $x$ and measurable with respect to $t$ in its domain,} \fel{$\tau$-periodic in $t$. Moreover, $D_x \mathcal{F}$ exists and is continuous with respect to $x$.} \sol{Let us consider the following hypotheses on $\mathcal{F}$:}

\begin{align}\tag{I}\label{HypothesisI}
\begin{cases}
   \fel{\text{if $x\geq 0$ with $x_i = 0$, then } \mathcal{F}_i (t,x) \geq 0, \text{ for $1\leq i \leq n$ } },\\
   \fel{\text{if $x\leq p$ with $x_i = p_i$, then } \mathcal{F}_i (t,x) \leq 0, \text{ for $1\leq i \leq n$ } }.
\end{cases}
\end{align}

\begin{align}\tag{M}\label{HypothesisM}
\begin{cases}
   \fel{D_x\mathcal{F}(t,0) \text{ is irreducible for every } t,}\\
   \fel{D_x \mathcal{F}(t,x)\text{ has non-negative off diagonal terms, for } (t,x) \in \mathbb{R}_+ \times [0,p]. }\\
\end{cases}
\end{align}

\begin{align}\tag{C}\label{HypothesisC}
\begin{cases}
   \fel{D_x \mathcal{F}(t,x) >_{\neq} D_x\mathcal{F}(t,y),\quad \text{if } 0<x<y<p.}
\end{cases}
\end{align}

\sol{At this point, let us recall the notation $\overline A$ and $\underline A$ introduced in the notation Section \ref{SecNotation}.}

\begin{theorem}[Conditions for Periodic Solutions]\label{Ch3:Main_Theo}
    Let $$\mathcal{F}:  \mathbb{R} \times [0,p]\ \to \mathbb{R}^n_+$$
    where $p\in \mathbb{R}^n_+$ be a function verifying \eqref{HypothesisI}, \eqref{HypothesisM} and \eqref{HypothesisC}.  Assume further that $\mathcal{F}(t,0)\equiv 0$ and all  the trajectories of  \eqref{ch3:sec1:eq1} are bounded.
Then, setting $A(t):=D_x \mathcal{F}(t,0)$, the following conditions hold.
\begin{enumerate}
   \item If all principal minors of $- \overline A$ are non-negative, then $lim_{t \to \infty} x(t) = 0$ for every solution of \eqref{ch3:sec1:eq1} with initial condition in $[0,p]$.
   \item If $- \munderbar{A}$ has at least one negative principal minor, then \eqref{ch3:sec1:eq1} possesses a unique $\tau$-periodic solution which attracts all initial conditions in $[0,p]\backslash \{0\}$.
\end{enumerate}
\end{theorem}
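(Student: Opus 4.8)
The plan is to translate the continuous-time, $\tau$-periodic problem \eqref{ch3:sec1:eq1} into a discrete dynamical system through its Poincar\'e (time-$\tau$) map and then invoke the monotone--concave machinery behind \cite[Theorem 5.5]{jifa1993cooperative}. First I would observe that the boundary sign conditions in \eqref{Assf}, namely $f_i(t,x,p-x)\ge 0$ when $x_i=0$ and $f_i(t,x,p-x)<0$ when $x_i=p_i$, make the box $[0,p]$ a compact forward-invariant set, so by Corollary \ref{sec3:Cor1} the Poincar\'e map $T\colon [0,p]\to[0,p]$, $T(x_0):={\bf x}(\tau;0,x_0)$, is well defined. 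Because $\mathcal{F}(t,0)\equiv 0$ we have $T(0)=0$, and by $\tau$-periodicity a fixed point $x^*=T(x^*)$ corresponds exactly to a $\tau$-periodic solution of \eqref{ch3:sec1:eq1}; likewise, attraction of $T$-orbits transfers to attraction of the continuous trajectories using boundedness and continuous dependence.

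Next I would record the three structural properties of $T$ that drive everything. The mixed monotonicity inequality $\partial f_i/\partial x_j-\partial f_i/\partial y_j\ge 0$ for $i\ne j$ says precisely that \eqref{ch3:sec1:eq1} is cooperative, so by Kamke's comparison theorem $T$ is order-preserving on $[0,p]$. The concavity hypothesis $D_x\mathcal{F}(t,x) >_{\neq} D_x\mathcal{F}(t,y)$ for $0<x<y<p$ passes to strict concavity (subhomogeneity) of $T$. Finally, Theorem \ref{sec3:theo2} gives that $T$ is $C^1$ with $D_{x_0}T(0)=M(\tau,0)$, the monodromy matrix of the variational system $\dot v=A(t)v$, $A(t)=D_x\mathcal{F}(t,0)$; irreducibility of $A(t)$ makes $M(\tau,0)$ a nonnegative irreducible matrix, so Perron--Frobenius applies and $\rho:=\rho(M(\tau,0))$ is its dominant eigenvalue. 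With these properties in hand, Jifa's dichotomy reduces the whole statement to the location of $\rho$ relative to $1$: if $\rho\le 1$ every orbit tends to $0$, and if $\rho>1$ there is a unique positive fixed point attracting $[0,p]\setminus\{0\}$.

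The remaining task, and the quantitative heart of the proof, is to read off the sign of $\rho-1$ from the principal-minor conditions. I would sandwich the monodromy matrix between the constant-coefficient flows: since $A(t)$ is Metzler with $\munderbar A\le A(t)\le \bar A$ entrywise, comparing each column ${\bf x}(\tau;0,e_j)$ of $M(\tau,0)$ with the solutions of $\dot w=\bar A w$ and $\dot w=\munderbar A w$ started at $e_j$ yields $e^{\munderbar A\tau}\le M(\tau,0)\le e^{\bar A\tau}$. Monotonicity of the spectral radius on nonnegative matrices then gives $e^{\tau s(\munderbar A)}\le \rho\le e^{\tau s(\bar A)}$, where $s(\cdot)$ denotes the spectral abscissa. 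Both $\bar A$ and $\munderbar A$ are Metzler, since entrywise extrema preserve nonnegative off-diagonal entries, so the M-matrix / Perron--Frobenius characterisation applies: all principal minors of $-\bar A$ are nonnegative iff $s(\bar A)\le 0$, and $-\munderbar A$ has a negative principal minor iff $s(\munderbar A)>0$. Hence the hypothesis of item (1) forces $\rho\le e^{\tau s(\bar A)}\le 1$, and that of item (2) forces $\rho\ge e^{\tau s(\munderbar A)}>1$, closing the argument through the dichotomy above.

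I expect the main obstacle to be the passage of concavity from the pointwise condition on $D_x\mathcal{F}$ to the Poincar\'e map $T$, and, more structurally, making Jifa's continuous-time argument go through when the dynamics is only measurable in $t$; this is exactly why the appendix first secures well-definedness, continuity, and especially the differentiability identity $D_{x_0}T=M(\tau,0)$ for Carath\'eodory solutions. The comparison estimate on $M(\tau,0)$ and the translation of spectral data into principal minors are, by contrast, routine once the Metzler structure is in place.
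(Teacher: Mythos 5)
Your proposal is correct and follows essentially the same route as the paper: the paper likewise reduces \eqref{ch3:sec1:eq1} to its Poincar\'e map (forward invariance of $[0,p]$ from the boundary sign conditions in \eqref{Assf}, well-definedness via Corollary \ref{sec3:Cor1}, differentiability with $D_{x_0}T = M(\tau,0)$ via Theorem \ref{sec3:theo2}) and then extends Jifa's \cite[Theorem 5.5]{jifa1993cooperative}, whose statement already carries the principal-minor conditions on $-\bar A$ and $-\munderbar{A}$, to time-measurable dynamics. The only difference is that you additionally unpack the mechanics hidden inside that citation --- the entrywise sandwich $e^{\munderbar{A}\tau}\le M(\tau,0)\le e^{\bar A\tau}$ by Kamke comparison, the resulting bounds $e^{\tau s(\munderbar{A})}\le\rho\le e^{\tau s(\bar A)}$, and the M-matrix equivalence between the principal-minor signs and the spectral abscissa --- whereas the paper defers all of this (including the transfer of concavity to $T$, which you rightly flag as the delicate step) entirely to Jifa's proof.
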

\begin{remark}
\label{RemThms}
    \fel{In the proof of Theorem \ref{Ch3:Main_Theo} (see \cite{smith1986cooperative,jifa1993cooperative}), the only property of $\mathbb{R} \times [0,p]$ that is used is the fact that for any $t\in\mathbb{R}$ it is bounded over the state space. This removes the possibility of unbounded trajectories. Therefore, we can safely extend the result for sets of the form $\{ (t,x): t\in\mathbb{R}, 0\leq x \leq p(t) \}$ for $p: \mathbb{R} \mapsto \mathbb{R}^n_+$} 
\end{remark}

\section{Impulsive equations}
\label{appendix_impulsive}

\begin{lemma}\label{Ch2:sec3:lem1}
For a sequence  $(\gamma_n)_{n\in\mathbb{N}}\subset \mathbb{R}$ and  $u_0 \in \mathbb{R}$,  consider the initial value problem 
\begin{equation}
\label{ch2:sec3:lem1:eq1}
\begin{split}
    \dfrac{d u}{d t}  &=  r(1 - u(t)),\quad \text{for}\ t \neq n\tau,\\
    u(n \tau^+) &= (1 - \gamma_n) u(n \tau^-),\quad \text{for } n \in \mathbb{N},\\
    u(0) &= u_{0}.
\end{split}
\end{equation}
  Then, the following assertions hold.
  \begin{itemize}
      \item[(i)] If $\gamma_n=\gamma\in [0,1]$ for all $n\in \mathbb{N}$ and $u_0 =\displaystyle 1 - \frac{\gamma}{1 - (1 - \gamma )e^{-r\tau}}$, then \eqref{ch2:sec3:lem1:eq1} admits the periodic solution $u_{\rm per}$ given by
\begin{align}\label{ch2:sec3:lem1:per_sol}
    u_{\rm per}(t) &\coloneqq 1 - \frac{\gamma e^{-r(t - n\tau)}}{1 - (1 - \gamma )e^{-r\tau}},\quad  \text{for}\ t \in [n\tau, (n+1)\tau).
\end{align} 
    \item[(ii)] Moreover, if $(\gamma_n)\subset [0,1]$ is any sequence converging to  $\gamma,$  then any solution $u$ of \eqref{ch2:sec3:lem1:eq1} for any initial condition $u_0\in [0,1]$ verifies
\[
    \max_{t \in [n\tau,(n+1)\tau)} |u(t) - u_{\rm per}(t)| \to 0 \ \text{as} \  n \to  \infty.
\]
  \end{itemize}
\end{lemma}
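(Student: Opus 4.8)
The plan is to collapse the impulsive dynamics onto a scalar difference equation for the post-impulse values and then analyse that recursion. First I would solve the ODE $\dot u = r(1-u)$ explicitly on each inter-impulse interval: the substitution $w = 1-u$ gives $\dot w = -rw$, so on $[n\tau,(n+1)\tau)$ one has $u(t) = 1 - (1-u(n\tau^+))e^{-r(t-n\tau)}$. Writing $\lambda := e^{-r\tau}\in(0,1)$ and $a_n := u(n\tau^+)$, evaluating at $(n+1)\tau^-$ and applying the jump $u((n+1)\tau^+) = (1-\gamma_{n+1})u((n+1)\tau^-)$ produces the stroboscopic recursion
\[
a_{n+1} = (1-\gamma_{n+1})\bigl[1 - (1-a_n)\lambda\bigr] = (1-\gamma_{n+1})(1-\lambda) + (1-\gamma_{n+1})\lambda\, a_n .
\]
Since $u_0\in[0,1]$ and every factor $1-\gamma_n$ lies in $[0,1]$, an induction (as in Proposition \ref{prop:veg_model_solutions}) shows $a_n\in[0,1]$ for all $n$, so the recursion is well posed and bounded.

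For part (i), with $\gamma_n\equiv\gamma$ the affine map has slope $(1-\gamma)\lambda<1$, hence a unique fixed point
\[
a^* = \frac{(1-\gamma)(1-\lambda)}{1-(1-\gamma)\lambda} = 1 - \frac{\gamma}{1-(1-\gamma)\lambda},
\]
which is exactly the prescribed $u_0$ and equals $u_{\rm per}(n\tau^+)$. I would then verify directly that $u_{\rm per}$ solves $\dot u = r(1-u)$ on each interval (it has the form $1-(1-a^*)e^{-r(t-n\tau)}$), satisfies the jump relation at every $n\tau$ (precisely the fixed-point identity $a^* = (1-\gamma)[1-(1-a^*)\lambda]$), and is $\tau$-periodic; this identifies it as the solution issuing from $u_0=a^*$.

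For part (ii) the key quantity is the error $e_n := a_n - a^*$. Subtracting the fixed-point identity from the recursion and regrouping yields the affine error equation
\[
e_{n+1} = \lambda(1-\gamma_{n+1})\,e_n + (\gamma-\gamma_{n+1})\bigl[(1-\lambda)+\lambda a^*\bigr],
\]
so that $|e_{n+1}| \le \lambda\,|e_n| + C\,|\gamma-\gamma_{n+1}|$ with $C := (1-\lambda)+\lambda a^*$ a constant and a contraction factor bounded by $\lambda<1$ uniformly in $n$ (the extra $1-\gamma_{n+1}\le 1$ only helps). Because $\gamma_n\to\gamma$ the inhomogeneous term vanishes, and as $(e_n)$ is bounded, taking $\limsup$ in the inequality gives $\limsup_n|e_n|\le \lambda\,\limsup_n|e_n|$, forcing $e_n\to 0$. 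Finally I would transfer this to the uniform-in-time statement: on $[n\tau,(n+1)\tau)$ both $u$ and $u_{\rm per}$ are of the form $1-(1-\cdot)e^{-r(t-n\tau)}$, so
\[
u(t)-u_{\rm per}(t) = (a_n - a^*)e^{-r(t-n\tau)} = e_n\,e^{-r(t-n\tau)},
\]
and since $e^{-r(t-n\tau)}\le 1$ on that interval, $\max_{t\in[n\tau,(n+1)\tau)}|u(t)-u_{\rm per}(t)| = |e_n|\to 0$.

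I expect the only genuinely delicate point to be the convergence of this \emph{perturbed} contraction: one must notice that the contraction rate is controlled uniformly by $\lambda=e^{-r\tau}<1$ while the forcing term tends to zero, and then run the $\limsup$ argument, which relies on the a priori bound $e_n\in[-1,1]$. Everything else reduces to direct computation with the explicit exponential solution.
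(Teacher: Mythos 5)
Your proof is correct, and up to the stroboscopic reduction it coincides with the paper's: both arguments solve the ODE explicitly between impulses, pass to the sequence of post-impulse values $a_n = u(n\tau^+)$, and identify the recursion $a_{n+1} = f_{n+1}(a_n)$ with $f_n(x) = (1-\gamma_n)\bigl(1-(1-x)e^{-r\tau}\bigr)$ and its fixed point $a^*$. Where you genuinely diverge is in the treatment of the generic case $\gamma_n \to \gamma$: the paper regards the orbit as an infinite composition $f_n \circ \cdots \circ f_1$ and invokes an external result (Lorentzen, on compositions of contractions: $f_n \to f$ uniformly, each $f_n$ a contraction, fixed points converging) to conclude convergence to the fixed point of the limit map, whereas you prove the convergence by hand. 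Subtracting the fixed-point identity from the recursion gives your perturbed-contraction inequality $|e_{n+1}| \le \lambda\,|e_n| + C\,|\gamma-\gamma_{n+1}|$ with $\lambda = e^{-r\tau} < 1$, and the $\limsup$ argument, legitimized by the a priori bound $|e_n|\le 1$, forces $e_n \to 0$. Your route buys self-containedness (no citation needed) and is in fact slightly more robust: the paper's proof is phrased for $\gamma_n \to \gamma > 0$, while your estimate only uses the uniform contraction factor $\lambda(1-\gamma_{n+1}) \le \lambda < 1$, so it covers the degenerate case $\gamma = 0$ (no asymptotic cleaning, $u_{\rm per} \equiv 1$) without modification. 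What the paper's citation buys is brevity and a statement that applies verbatim to more general families of contracting maps. The final step, transferring the discrete error $|e_n|$ to the uniform-in-time bound on $[n\tau,(n+1)\tau)$ via $u(t)-u_{\rm per}(t) = e_n e^{-r(t-n\tau)}$, is the same in both proofs.
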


\begin{proof}
Item (i) follows easily.
Let us prove item (ii) assuming initially that  $\gamma_n = \gamma$ for all $n\in \mathbb{N}$. The proof consists in taking the sequence formed by the points of discontinuity and studying its convergence. 
We begin by calculating $u\big((n+1)\tau^+ \big)$ as a function of $u\big(n \tau^+ \big)$. By solving the initial value problem
\begin{equation*}
\begin{split}
    \dot u(t) &= r(1 - u(t)),\quad \text{for}\ t  \in (n\tau,(n+1)\tau),\quad  u(n\tau) = u(n\tau^+),
\end{split}
\end{equation*}
and taking the limit as $t \to (n+1)\tau^{-}$, we get:
\begin{equation}
    u\big((n+1)\tau^-\big) = 1 - [1 - u\big(n\tau +\big)]e^{-r\tau}. \label{impulsive_per_solution_lemma_aux1}
\end{equation}
On the other hand, we can write $u((n+1)\tau^+)$ as a function of $u((n+1)\tau^-)$, using the impulse at $(n+1)\tau$, as
$
    u\big((n+1)\tau^+\big) = (1-\gamma) u\big((n+1)\tau^-\big).
$
Using latter equation and \eqref{impulsive_per_solution_lemma_aux1} yields 
\begin{equation}
\label{Ch2:sec3:recurrence_equation}
    u((n+1)\tau^+) = (1 - \gamma)\Big(1 - \big(1 - u(n\tau^+)\big)e^{-r\tau}\Big),
\end{equation}
    defining a recurrence relation for $u(n\tau^+)$ induced by the function 
    \begin{equation}
        \label{f}
        x \mapsto  f(x) \coloneqq (1 - \gamma)\Big(1 - \big(1 - x \big)e^{-r\tau}\Big).
    \end{equation}
    As $f$ is a contraction over $[0,1],$ the sequence $\big(u(n\tau^+)\big)_{n\in \mathbb{N}}$ converges to its fixed point
\begin{equation}\label{fixed_point_eq}
     u^* = \frac{(1 - \gamma)(1 - e^{-r\tau})}{1 - (1 - \gamma)e^{-r\tau}}.
\end{equation}
    This means that any solution $u$ converges to $u_{\rm per}$ given by:
\begin{equation*}
\begin{split}
    u_{\rm per}(t) = 1 - [1 - u^*]e^{-r(t - n\tau)} = 1 - \frac{\gamma e^{-r(t + n\tau)}}{1 - (1 - \gamma)e^{-r\tau}},\quad t \in [n\tau, (n+1)\tau),\, \, n \in \mathbb{N}.\\ 
\end{split}
\end{equation*}
Finally, in the generic case, notice  that for each $\gamma_n$, we have a corresponding $f_n$. We have to show that the infinite composition 
$$ 
\lim_{n \to \infty} f_n \circ \cdots \circ f_1(x) 
$$ 
converges. As  $\gamma_n \to \gamma > 0$, from \eqref{f} we get that $f_n \to f$ uniformly on $[0,1]$. For every $n \in \mathbb{N}$, $f_n$ is a contraction, which implies it has a fixed point $u^*_n$ in the form of equation \eqref{fixed_point_eq}. From \eqref{fixed_point_eq}, we can see that $u^*_n \to u^*$. These conditions guarantee us (see \cite{lorentzen1990compositions}) that the infinite composition converges to the fixed point of its limit function $f$. Thus the proposition has been proven. 
\end{proof}

\bibliographystyle{siamplain}
\bibliography{references}

\end{document}


\maketitle

\section{A detailed example}

Here we include some equations and theorem-like environments to show
how these are labeled in a supplement and can be referenced from the
main text.
Consider the following equation:
\begin{equation}
  \label{eq:suppa}
  a^2 + b^2 = c^2.
\end{equation}
You can also reference equations such as \cref{eq:matrices,eq:bb} 
from the main article in this supplement.

\lipsum[100-101]

\begin{theorem}
  An example theorem.
\end{theorem}

\lipsum[102]
 
\begin{lemma}
  An example lemma.
\end{lemma}

\lipsum[103-105]

Here is an example citation: \cite{KoMa14}.

\section[Proof of Thm]{Proof of \cref{thm:bigthm}}
\label{sec:proof}
\lipsum[106-112]

\section{Additional experimental results}
\Cref{tab:foo} shows additional
supporting evidence. 

\begin{table}[htbp]
{\footnotesize
  \caption{Example table}  \label{tab:foo}
\begin{center}
  \begin{tabular}{|c|c|c|} \hline
   Species & \bf Mean & \bf Std.~Dev. \\ \hline
    1 & 3.4 & 1.2 \\
    2 & 5.4 & 0.6 \\ \hline
  \end{tabular}
\end{center}
}
\end{table}

\bibliographystyle{siamplain}
\bibliography{references}